\tikzstyle{every picture}=[
\newcommand{\ocap}{\otimes}
\newcommand{\ocup}{\oplus}
\newcommand{\oneg}{\ominus}
\newcommand{\bigwedgee}{{\bigwedge_e}}
\newcommand{\Intervalle}[2]{\{#1,\ldots,#2\}}
\newtheorem{definition}{Definition}
\newtheorem{proposition}{Proposition}
\newtheorem{corollary}{Corollary}
\newtheorem{theorem}{Theorem}
\newtheorem{example}{Example}
\begin{document}

  \title{A General Framework for the Derivation of Regular Expressions}

  \author{
    Pascal Caron, Jean-Marc Champarnaud and Ludovic Mignot\\
    \{pascal.caron,jean-marc.champarnaud,ludovic.mignot\}@univ-rouen.fr\\
    LITIS, Universit\'e de Rouen,\\
     76801 Saint-\'Etienne du Rouvray Cedex, France
  }

  \maketitle

   \begin{abstract}
The aim of this paper is to design a theoretical 
framework
that allows us to perform the computation of regular expression derivatives through a space of generic structures.
Thanks to this formalism, the main properties of regular expression derivation, such as 
the finiteness of the set of derivatives,
need only be stated and proved one time, at the top level. 
Moreover, it is shown how to
construct an alternating automaton
associated with the derivation of a regular expression in this general 
framework.
Finally, Brzozowski's derivation and Antimirov's derivation turn out to be a particular case of this general scheme
and it is shown how to construct a DFA, a NFA and an AFA for both of these derivations.
  \end{abstract}

\section{Introduction}

The (left) quotient of a language $L$ over an alphabet $\Sigma$ with respect to a word $w$ in $\Sigma^*$
is the language obtained by stripping the leading $w$ from the words in $L$ that are prefixed by $w$. 
The quotient operation plays a fundamental role in language theory and is especially involved in two main issues.
First, checking whether a word $w$ belongs to a language $L$
turns out to be equivalent to checking whether the empty word belongs to the quotient of $L$ w.r.t. $w$. 
Secondly it was proved by Myhill~\cite{Myh57} and Nerode~\cite{Ner58} that a language is regular
if and only if the set of its quotients w.r.t. all the words in $\Sigma^*$ is finite.
In the case of a regular language $L$, 
the different quotients are the states of the so-called quotient automaton of $L$
that is isomorphic to its minimal deterministic automaton.

Since the equality of two languages amounts to 
the isomorphism of their minimal deterministic automata, 
the construction of the quotient automaton via the computation of the left quotients is intractable.
The seminal work of Brzozowski~\cite{Brz64}
that introduced the notion of a word derivative of a regular expression
and the construction of the associated DFA,
gave rise to a long series of studies~(see for example~\cite{BS86,Ant96,CZ01c,IY03,CCM11b}) 
that are all based on 
the simulation of the computation of a language 
quotient
by the one of an expression derivative.
In all these research works, the first rule is that if the expression $E$ denotes the language $L$,
then the derivative of $E$ w.r.t. $w$, for any $w$, denotes the quotient of $L$ w.r.t. the word $w$.
Thus, checking whether a word $w$ belongs to the language denoted by the expression $E$  is equivalent to 
checking if the empty word belongs to the language denoted by the derivative of $E$ w.r.t. $w$.
The second rule is that as far as the set $D$ of all the derivatives of $E$ is finite,
a finite automaton recognizing the language denoted by $E$ can be constructed,
admitting $D$ as a set of states.

Let us notice 
that Brzozowski derivatives~\cite{Brz64} handle 
unrestricted regular expressions and provide a deterministic automaton;
Antimirov derivatives~\cite{Ant96} only address simple regular expressions 
and provide both a deterministic automaton and a non-deterministic one;
Antimirov derivatives have been recently extended to 
regular expressions~\cite{CCM11b} and this extension
provides a deterministic automaton, a non-deterministic one
and, as shown in this paper, an alternating automaton.
Berry and Sethi continuations~\cite{BS86} are based on the linearization of the (simple) input expression
and allow the construction of its Glushkov (non-deterministic) automaton.
Champarnaud and Ziadi c-continuations~\cite{CZ01c} and Ilie and Yu derivatives~\cite{IY03}
allow both the construction of the Glushkov automaton and of the Antimirov non-deterministic automaton.
Let us mention that derivation has been extended to expressions with multiplicity~\cite{LS01,COZ09}. 
 
As mentioned by Antimirov~\cite{Ant96},
derivatives of regular expressions have proved to be a productive concept to investigate theoretical topics such as
the algebra of regular expressions~\cite{Con71} or
of $K$-regular expressions~\cite{Kro92},
the systems of language equations~\cite{BL80},
the equivalence of simple regular expressions~\cite{Gin67} or of regular expressions~\cite{AM95}.
More recently, Brzozowski introduced a new approach for finding upper bounds
for the state complexity of regular languages,
based on the counting of their quotients (or of their derivatives)~\cite{Brz10}.

Moreover, derivatives provide a useful tool to implement regular matching algorithms:
Brzozowski's DFA and Antimirov's NFA turn out to be 
competitive matching automata~\cite{SL07},
compared for instance with Thompson's $\varepsilon$-automaton~\cite{Tho68}.
The derivative-based techniques are well-suited to functional languages,
that are characterized by a good support for symbolic term manipulation.
As an example, two derivative-based scanner generators have been recently developed,
one for PLT Scheme and one for Standard ML, as reported in~\cite{ORT09}.
Similarly, Brzozowski's derivatives are used in the implementation of the XML schema language RELAX NG~\cite{Cla02}.
Finally, let us notice that derivatives can be extended to context-free grammars, seen as recursive regular expressions,
yielding a system for parsing context-free grammars~\cite{MDS11}.

The aim of this paper is to design a general 
framework
where the computation of the set of derivatives of a regular expression, called derivation, is performed over a space of generic structures.
Of course Brzozowski's derivation and Antimirov's one appear as particular cases of this general scheme.
A first benefit of this formalism is that the properties inherent to the mechanism of derivation, such as 
the equality between the language denoted by a derivative and the corresponding quotient,
the finiteness of the set of derivatives
and the way for constructing the associated automata,
need only be stated and proved one time, at the top level. 
A second benefit is that the general 
framework
allows us to design the construction of an AFA from the set of derivatives.
As a consequence, we show how to construct a DFA, a NFA and an AFA for any finite derivation, including both Brzozowski's one and Antimirov's one.

The next section is a preliminary section; it gathers classical notions concerning 
regular languages, regular expressions and finite automata, as well as boolean formulas and alternating automata.
The notion of a regular expression derivation via a support is defined in Section~\ref{sec deriv via},
and the properties of the corresponding derivatives are investigated. 
Section~\ref{sec from der to aut} is devoted to the construction of the alternating automaton
associated with the derivation of a regular expression via a support.
This construction is illustrated in Section~\ref{sec deriv via set clausal form},
where the support is based on the set of clausal forms over the alphabet of the regular expressions.

\section{Preliminaries}\label{sec prelim}

  Let $\mathbb{B}=\{0,1\}$. A \emph{boolean formula} $\phi$ over a set $X$ is inductively defined by $\phi=x$ where $x\in X$, or $\phi=\mathrm{f}_{\mathbb{B}}(\phi_1,$ $\ldots, \phi_k)$, where $\mathrm{f}_{\mathbb{B}}$ is the operator associated to the $k$-ary boolean function $\mathrm{f}$ from $\mathbb{B}^k$ to $\mathbb{B}$ and $\phi_1,\ldots,\phi_k$ are boolean formulas over $X$. The set of the boolean formulas over $X$ is denoted by $\mathrm{BoolForm}(X)$.
    Let $\mathrm{v}$ be a function from $X$ to $\mathbb{B}$. The \emph{evaluation of} $\phi$ \emph{with respect to} $\mathrm{v}$ is the boolean $\mathrm{eval}_{\mathrm{v}}(\phi)$ inductively defined by: $\mathrm{eval}_{\mathrm{v}}(x)=\mathrm{v}(x)$, $\mathrm{eval}_{\mathrm{v}}(\mathrm{f}_{\mathbb{B}}(\phi_1,\ldots,\phi_k))=\mathrm{f}(\mathrm{eval}_{\mathrm{v}}(\phi_1),\ldots,\mathrm{eval}_{\mathrm{v}}(\phi_k))$. The set $\mathrm{Atom}(\phi)$ is the subset of $X$ inductively defined by $\mathrm{Atom}(x)=\{x\}$ and $\mathrm{Atom}(\mathrm{f}_{\mathbb{B}}(\phi_1,\ldots,\phi_k))=\bigcup_{1\leq j\leq k}\mathrm{Atom}(\phi_j)$.
    
    Let $\Sigma$ be an alphabet, $w$ be a word in $\Sigma^*$ and $L$ be a language over $\Sigma$. Deciding whether $w$ belongs to $L$ is called the \emph{membership problem for the language} $L$. We denote by $\mathrm{r}_w(L)$ the boolean equal to $1$ if $w\in L$, $0$ otherwise. Let $L_1,\ldots,L_k$ be $k$ languages. We denote by $\cdot$, $^*$ and $\mathrm{f}_L$ for any $k$-ary boolean function $\mathrm{f}$ the operators defined as follows:
     $L_1\cdot L_2=\{w_1\cdot w_2 \in\Sigma^{*}\mid r_{w_1}(L_1) \wedge \mathrm{r}_{w_2}(L_2)=1\}$,
     $L_1^{*}=\{\varepsilon\}\cup \{w_1\cdots w_n\in\Sigma^{*}\mid n\in\mathbb{N}\wedge \forall k\in\Intervalle{1}{n}, \mathrm{r}_{w_k}(L_1)=1\}$,     
      $\mathrm{f}_L(L_1,\ldots,L_k)=\{w\in \Sigma^{*}\mid \mathrm{f}(\mathrm{r}_w(L_1),\ldots,\mathrm{r}_w(L_k))=1\}$. The \emph{quotient of} a regular language $L$ with respect to a word $w$, that is defined as the set $w^{-1}(L)=\{w'\in\Sigma^* \mid \mathrm{r}_{ww'}(L)=1\}$ can be inductively computed as follows: 
$\varepsilon^{-1}(L)=L$,  and for $a\in \Sigma$,

	    $a^{-1}(L)=
	      \left\{
	        \begin{array}{l@{\ }l}
	          a^{-1}(L_1)\cdot  L_2 \cup a^{-1}(L_2) & \text{ if } L=L_1 \cdot  L_2\wedge \varepsilon\in L_1,\\
	          a^{-1}(L_1)\cdot  L_2 & \text{ if } L=L_1\cdot  L_2\wedge \varepsilon\notin L_1,\\
	          a^{-1}(L_1)\cdot  L_1^{*} & \text{ if } L=L_1^{*},\\
	          \mathrm{f}_L(a^{-1}(L_1),\ldots,a^{-1}(L_n)) & \text{ if } L=\mathrm{f}_L(L_1,\ldots,L_n),\\
	          \{\varepsilon\} & \text{ if } L=\{a\},\\
	          \emptyset & \text{ otherwise.}\\
	        \end{array}
	      \right.$

$(a\cdot w')^{-1}(L)=w'^{-1}(a^{-1}(L))$ for $w'\in \Sigma^+$.

  An \emph{Alternating Automaton} (\textbf{AA}) is a $5$-tuple $A=(\Sigma,Q,I,F,\delta)$ where $\Sigma$ is an \emph{alphabet}, $Q$ is a set of states, $I$ is a boolean formula over $Q$, $F$ is a function from $Q$ to $\mathbb{B}$ and $\delta$ is a function from $Q\times\Sigma$ to $\mathrm{BoolForm}(Q)$. The function $\delta$ is extended from $\mathrm{BoolForm}(Q)\times \Sigma^*$ to $\mathrm{BoolForm}(Q)$ as follows: $\delta(\phi,\varepsilon)=\phi$, $\delta(\phi,aw)=\delta(\delta(\phi,a),w)$, $\delta(\mathrm{f}_{\mathbb{B}}(\phi_1,\ldots,\phi_k),a)=\mathrm{f}_{\mathbb{B}}(\delta(\phi_1,a),\ldots,\delta(\phi_k,a))$ where $a$ is any symbol in $\Sigma$, $w$ is any word in $\Sigma^*$ and $\phi_1,\ldots,\phi_k$ are any $k$ boolean formulas over $Q$.
  The \emph{accessible part of} $A$ is the alternating automaton $(\Sigma,Q',I,F',\delta')$ defined by: 
  $Q'=\{q\in Q\mid\exists w\in\Sigma^{*}, q\in\mathrm{Atom}(\delta(I,w))\}$;
  $\forall q\in Q'$, $F'(q)=F(q)$; $\forall a\in\Sigma, \forall q\in Q', \delta'(q,a)=\delta(q,a)$.  
  The \emph{language recognized} by the alternating automaton $A$ is the subset $L(A)$ of $\Sigma^*$ defined by $L(A)=\{w\in\Sigma^*\mid\mathrm{eval}_F(\delta(I,w))=1\}$.
  Whenever $Q$ is a finite set, $A$ is said to be an \emph{Alternating Finite state Automaton} (\textbf{AFA}).   
    
  A \emph{regular expression} $E$ over an alphabet $\Sigma$ is inductively defined by: $E=a$, $E=1$, $E=0$, $E=E_1\cdot E_2$, $E=E_1^*$ or $E=\mathrm{f}_{\mathrm{e}}(E_1,\ldots,E_n)$, where $\forall k\in \mathbb{N}$, $E_k$ is a regular expression, $a\in\Sigma$ and $\mathrm{f}_{\mathrm{e}}$ is the operator associated to the $k$-ary boolean function $\mathrm{f}$, \emph{e.g.} $+$ is the operator associated to $\vee$.
  In the following, we assume that the regular expression operators as well as the boolean formula operators have no specific algebraic properties, unlike the boolean functions. For instance, the operator $+$ is not associative, not commutative, nor idempotent. 
 A regular expression is said to be \emph{simple} if the only boolean operator used is the sum.  
 
  The set of the regular expressions over an alphabet $\Sigma$ is denoted by $\mathrm{Exp}(\Sigma)$.
  The \emph{language denoted by} $E$ is the subset $L(E)$ of $\Sigma^{*}$ inductively defined as follows: $L(E\cdot F)=L(E)\cdot L(F)$, $L(E^*)=(L(E))^{*}$, $L(\mathrm{f}_{\mathrm{e}}(E_1,\ldots,E_n))=\mathrm{f}_L(L(E_1),\ldots,L(E_n))$, $L(a)=\{a\}$, $L(0)=\emptyset$, and $L(1)=\{\varepsilon\}$ with $E$ and $F$ any two regular expressions, $a$ any symbol in $\Sigma$ and $\mathrm{f}_L$ the operator associated with $\mathrm{f}$ (\emph{e.g.} $\cup$ is associated to $\vee$). Whenever two expressions $E_1$ and $E_2$ denote the same language, $E_1$ and $E_2$ are said to be \emph{equivalent} (denoted by $E_1\sim E_2$).
   In the following, we denote by $\mathrm{r}_w(E)$ the boolean $\mathrm{r}_w(L(E))$.	  
 Notice that the boolean $r_\varepsilon(E)$ is straightforwardly computed as follows:
	  
	  \centerline{
	    $\forall \alpha\in\Sigma\cup\{1,0\}$, $r_\varepsilon(\alpha)=
	      \left\{
	        \begin{array}{l@{\ }l}
	          1 & \text{ if } \alpha=1,\\
	          0 & \text{ otherwise,}\\
	        \end{array}
	      \right.$
	  }
	  
	  \centerline{
	    $r_\varepsilon(\mathrm{f}_{\mathrm{e}}(E_1,\ldots,E_k))=\mathrm{f}(r_\varepsilon(E_1),\ldots,r_\varepsilon(E_k))$,
	  }
	  
	  \centerline{
	    $r_\varepsilon(E_1\cdot E_2)= r_{\varepsilon}(E_1)\wedge r_{\varepsilon}(E_2),$
	  }
	  
	  \centerline{
	    $r_\varepsilon(E_1^*)=1$.
	  }
	  
Given a regular expression $E$ over an alphabet $\Sigma$ and a symbol $a$ in $\Sigma$, the \emph{Brzozowski derivative} of $E$ w.r.t. $a$ is the expression $\frac{d}{d_a}(E)$ inductively defined by:

\centerline{ $\frac{d}{d_a}(a)=1$, $\frac{d}{d_a}(b)=\frac{d}{d_a}(1)=\frac{d}{d_a}(0)=0$,}

\centerline{ $\frac{d}{d_a}(\mathrm{f}_e(E_1,\ldots,E_k))=\mathrm{f}_e(\frac{d}{d_a}(E_1),\ldots,\frac{d}{d_a}(E_k))$, $\frac{d}{d_a}(E_1^*)=\frac{d}{d_a}(E_1)\cdot E_1^*$,}

\centerline{
  $\frac{d}{d_a}(E_1\cdot E_2)=
    \left\{
      \begin{array}{l@{\ }l}
        \frac{d}{d_a}(E_1)\cdot E_2 + \frac{d}{d_a}(E_2) & \text{ if } \mathrm{r}_\varepsilon(E_1)=1,\\
        \frac{d}{d_a}(E_1)\cdot E_2 & \text{ otherwise,}\\
      \end{array}
    \right.
  $}
  
  where $b$ is any symbol in $\Sigma\setminus\{a\}$, $E_1,\ldots,E_k$ are any $k$ regular expressions over $\Sigma$, and $\mathrm{f}_e$ is the operator associated with the $k$-ary boolean function $\mathrm{f}$. Notice that Brzozowski defines the \emph{dissimilar derivative} of $E$ as the expression $\frac{d'}{d'_a}(E)$ inductively computed by substituting the operator $+_{ACI}$ to the $+$ operator in the derivative formulas, where $+_{ACI}$ is the associative, commutative and idempotent version of $+$.
  
  Given a simple regular expression $E$ over an alphabet $\Sigma$ and a symbol $a$ in $\Sigma$, the \emph{Antimirov partial derivative} of $E$ w.r.t. $a$ is the set of expression $\frac{\partial}{\partial_a}(E)$ inductively defined by:

\centerline{ $\frac{\partial}{\partial_a}=\{1\}$, $\frac{\partial}{\partial_a}(b)=\frac{\partial}{\partial_a}(1)=\frac{\partial}{\partial_a}(0)=\emptyset$,}

\centerline{ $\frac{\partial}{\partial_a}(E_1+E_2)=\frac{\partial}{\partial_a}(E_1)\cup\frac{\partial}{\partial_a}(E_2)$, $\frac{\partial}{\partial_a}(E_1^*)=\frac{\partial}{\partial_a}(E_1)\cdot E_1^*$,}

\centerline{
  $\frac{\partial}{\partial_a}(E_1\cdot E_2)=
    \left\{
      \begin{array}{l@{\ }l}
        \frac{\partial}{\partial_a}(E_1)\cdot E_2 \cup \frac{\partial}{\partial_a}(E_2) & \text{ if } \mathrm{r}_\varepsilon(E_1)=1,\\
        \frac{\partial}{\partial_a}(E_1)\cdot E_2 & \text{ otherwise,}\\
      \end{array}
    \right.
  $}
  
  where $b$ is any symbol in $\Sigma\setminus\{a\}$, $E_1,\ldots,E_k$ are any $k$ regular expressions over $\Sigma$, and for any set $\mathcal{E}$ of regular expression, for any regular expression $F$, $\mathcal{E}\cdot F=\bigcup_{E\in\mathcal{E}}\{E\cdot F\}$.

\section{Derivation \emph{via} a Support}\label{sec deriv via}
We now introduce the notion of a derivation {\it via} a support.
Recall that a Brzozowski derivative is an expression, whereas an Antimirov derivative is a set of expressions.
In our 
framework,
a derivative is more generally an element of an arbitrary set, called the structure set. For example, a structure can be an expression, a set of expressions or a set of set of expressions.
A support is essentially made of a structure set equipped with operators,
and of a mapping that transforms a structure into an expression.

  \begin{definition}\label{def support}
    Let $\Sigma$ be an alphabet. Let $\mathbb{E}$ be a set and $\mathrm{h}$ be a mapping from $\mathbb{E}$ to $\mathrm{Exp}(\Sigma)$. Let $\mathcal{O}$ be a set containing:
    \begin{itemize}
      \item for any $k$-ary boolean function $\mathrm{f}$, an operator $\mathrm{f}_{\mathbb{E}}$ from $\mathbb{E}^k$ to $\mathbb{E}$,
      \item an operator $\cdot_\mathbb{E}$ from $\mathbb{E}\times \mathrm{Exp}(\Sigma)$ to $\mathbb{E}$.
    \end{itemize}
    Let $1_\mathbb{E}$ and $0_\mathbb{E}$ be two elements in $\mathbb{E}$. The $6$-tuple $\mathbb{S}=(\Sigma,\mathbb{E},\mathrm{h},\mathcal{O},1_\mathbb{E},0_\mathbb{E})$ is said to be a \emph{support} if the three following conditions are satisfied:
    \begin{enumerate}
      \item for any $k$ elements $\mathcal{E}_1,\ldots,\mathcal{E}_k$ in $\mathbb{E}$:
    
      \centerline{
        $\mathrm{h}(\mathrm{f}_{\mathbb{E}}(\mathcal{E}_1,\ldots,\mathcal{E}_k)) \sim \mathrm{f}_e(\mathrm{h}(\mathcal{E}_1),\ldots,\mathrm{h}(\mathcal{E}_k))$,
      }
    
      \item for any element $\mathcal{E}$ in $\mathbb{E}$, for any expression $E$ in $\mathrm{Exp}(\Sigma)$:
    
      \centerline{
        $\mathrm{h}(\mathcal{E} \cdot_{\mathbb{E}} E ) \sim \mathrm{h}(\mathcal{E})\cdot E$,
      }
      
      \item $\mathrm{h}(1_\mathbb{E})\sim 1$ and $\mathrm{h}(0_\mathbb{E})\sim 0$.
    \end{enumerate}
  \end{definition}
  
  Notice that the expressions $\mathrm{h}(\mathrm{f}_{\mathbb{E}}(\mathcal{E}_1,\ldots,\mathcal{E}_k))$ and $\mathrm{f}_e(\mathrm{h}(\mathcal{E}_1),\ldots,\mathrm{h}(\mathcal{E}_k))$ need not to be identical. They are only required to define the same language.  
  A support is based on a set of generic structures that can be used to handle regular expressions. We now define the notion of regular expression derivation \emph{via} a support.
  
  \begin{definition}
    Let $\mathbb{S}=(\Sigma,\mathbb{E},\mathrm{h},\mathcal{O},1_\mathbb{E},0_\mathbb{E})$ be a support. The \emph{derivation via} $\mathbb{S}$ is the mapping $\mathrm{D}$ from $\Sigma^{+}\times\mathrm{Exp}(\Sigma)$ to $\mathbb{E}$ inductively defined for any  $a\in \Sigma$, for any word $w$ in $\Sigma^+$ and for any expression $E$ in $\mathrm{Exp}(\Sigma)$ by:

    \centerline{
        $\mathrm{D}(a,E)=
          \left\{
            \begin{array}{l@{\ }l}
              \mathrm{f}_{\mathbb{E}}(\mathrm{D}(a,E_1),\ldots,\mathrm{D}(a,E_k)) &  E=\mathrm{f}_e(E_1,\ldots,E_k),\\
             (\mathrm{D}(a,E_1) \cdot_{\mathbb{E}} E_2)  \vee_{\mathbb{E}} \mathrm{D}(a,E_2) &  E=E_1\cdot E_2\ \wedge\ \varepsilon\in L(E_1),\\
              \mathrm{D}(a,E_1) \cdot_{\mathbb{E}} E_2 &  E=E_1\cdot E_2\ \wedge\ \varepsilon\notin L(E_1),\\
              \mathrm{D}(a,E_1) \cdot_{\mathbb{E}} E_1^*  &  E=E_1^*,\\
              1_\mathbb{E} & E=a,\\
              0_\mathbb{E} &  E\in\Sigma\setminus\{a\}\cup\{1,0\}.\\
            \end{array}
          \right.$
    }      
     \centerline{   $\mathrm{D}(w,E)=
                    \mathrm{D}(u,\mathrm{h}(\mathrm{D}(a,E)))  \text{ if }w=au\ \wedge\ u\in\Sigma^+$\\
       }

  \end{definition}
   
     Furthermore, if for all expression $E$ in $\mathrm{Exp}(\Sigma)$, the set $\{D(w,E)\mid w\in\Sigma^{+}\}$ is finite, the derivation $\mathrm{D}$ is said to be a \emph{finite derivation}.

  \subsection{Classical Derivations are Derivations \emph{via}  a Support} 
  
     This subsection illustrates the fact that both Antimirov's derivation and  Brzozowski's one are derivations \emph{via} a support.
     
     \begin{definition}\label{def support anti}
       We denote by $\mathbb{S}_A=(\Sigma,\mathbb{E}=2^{\mathrm{Exp}(\Sigma)},\mathrm{h}_A,\mathcal{O}_A,\{1\},\emptyset)$ the $6$-tuple defined by:
   \begin{itemize}
     \item for any $\mathcal{E}\in 2^{\mathrm{Exp}(\Sigma)}$, $\mathrm{h}_A(\mathcal{E})=\sum_{E\in\mathcal{E}} E$,
     \item $\mathcal{O}_A=\{\mathrm{f}_{\mathbb{E}}\mid \mathrm{f}\text{ is a } k\text{-ary boolean function}\}\cup\{\cdot_{\mathbb{E}}\}$ where for any elements $\mathcal{E}_1,\ldots,\mathcal{E}_k$  in $2^{\mathrm{Exp}(\Sigma)}$,
     \begin{itemize}
        \item $\mathcal{E} \cdot_{\mathbb{E}} F=\bigcup_{E\in\mathcal{E}} \{E\cdot F\}$,
        \item $\mathrm{f}_{\mathbb{E}}(\mathcal{E}_1,\ldots,\mathcal{E}_k)=\mathcal{E}_1\cup \mathcal{E}_2$ if $k=2$ and $\mathrm{f}=\vee$,
        \item $\mathrm{f}_{\mathbb{E}}(\mathcal{E}_1,\ldots,\mathcal{E}_k)=\{\mathrm{f}_e(\mathrm{h}_A(\mathcal{E}_1),\ldots,\mathrm{h}_A(\mathcal{E}_k))\}$ otherwise.
      \end{itemize}
    \end{itemize}
     \end{definition}
   
   \begin{proposition}\label{prop supp anti est supp}
     The $6$-tuple $\mathbb{S}_A$ is a support. Furthermore, for any simple regular expression $E$ over $\Sigma$, for any symbol $a$, it holds $\mathrm{D}_A(a,E)=\frac{\partial}{\partial_a}(E)$, where  $\mathrm{D}_A$ is the derivation \emph{via} $\mathbb{S}_A$.
   \end{proposition}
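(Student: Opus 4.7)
The plan has two parts, corresponding to the two claims in the proposition.

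\smallskip

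\textbf{Part 1: $\mathbb{S}_A$ is a support.} I would verify the three conditions of Definition~\ref{def support} directly from the definition of $\mathrm{h}_A$ and the operators in $\mathcal{O}_A$. Condition~3 is immediate: $\mathrm{h}_A(\{1\}) = \sum_{E\in\{1\}} E = 1$ and $\mathrm{h}_A(\emptyset)$ is an empty sum denoting $\emptyset = L(0)$, so it is equivalent to $0$. For condition~2, $\mathrm{h}_A(\mathcal{E}\cdot_\mathbb{E} F) = \sum_{E\in\mathcal{E}} (E\cdot F)$ denotes $\bigcup_{E\in\mathcal{E}} L(E)\cdot L(F) = \bigl(\bigcup_{E\in\mathcal{E}} L(E)\bigr)\cdot L(F)$, which is exactly the language of $\mathrm{h}_A(\mathcal{E})\cdot F$, giving the required equivalence. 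For condition~1, I would split according to the two cases of the definition of $\mathrm{f}_\mathbb{E}$: when $k=2$ and $\mathrm{f}=\vee$, $\mathrm{h}_A(\mathcal{E}_1\cup\mathcal{E}_2) = \sum_{E\in\mathcal{E}_1\cup\mathcal{E}_2} E$ denotes $\bigcup_{E\in\mathcal{E}_1} L(E)\cup\bigcup_{E\in\mathcal{E}_2} L(E)$, i.e.\ the language of $\mathrm{h}_A(\mathcal{E}_1)+\mathrm{h}_A(\mathcal{E}_2)$; in the other case, $\mathrm{h}_A(\{\mathrm{f}_e(\mathrm{h}_A(\mathcal{E}_1),\ldots,\mathrm{h}_A(\mathcal{E}_k))\})$ equals $\mathrm{f}_e(\mathrm{h}_A(\mathcal{E}_1),\ldots,\mathrm{h}_A(\mathcal{E}_k))$ syntactically, hence a fortiori in the $\sim$ sense.

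\smallskip

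\textbf{Part 2: $\mathrm{D}_A(a,E)=\frac{\partial}{\partial_a}(E)$ for simple $E$.} I would proceed by structural induction on the simple regular expression $E$, matching each clause of the definition of $\mathrm{D}$ against the corresponding clause of Antimirov's partial derivative. The base cases $E=a$, $E=b\neq a$, $E=1$ and $E=0$ follow from $1_\mathbb{E}=\{1\}$ and $0_\mathbb{E}=\emptyset$. For $E=E_1+E_2$, the top-level operator is $\vee_e$, so the first clause of $\mathrm{D}$ together with the first sub-case of $\mathrm{f}_\mathbb{E}$ (with $k=2$, $\mathrm{f}=\vee$) yields $\mathrm{D}_A(a,E)=\mathrm{D}_A(a,E_1)\cup\mathrm{D}_A(a,E_2)$, which equals $\frac{\partial}{\partial_a}(E_1)\cup\frac{\partial}{\partial_a}(E_2)$ by the induction hypothesis. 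The two sub-cases of concatenation match the two sub-cases of the partial derivative of a product (noting $\varepsilon\in L(E_1)\Leftrightarrow \mathrm{r}_\varepsilon(E_1)=1$), using that $\vee_\mathbb{E}$ is union and that $\cdot_\mathbb{E}$ coincides with the operator $\mathcal{E}\cdot F$ defined at the end of the preliminaries. The Kleene star case is likewise an immediate unfolding.

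\smallskip

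\textbf{Expected obstacles.} There are essentially none of substance; the whole proposition is a sanity check that the generic framework specialises correctly. The only place where one has to be mildly careful is condition~1 of the support: because simple expressions only use the sum, the ``otherwise'' branch of $\mathrm{f}_\mathbb{E}$ is never exercised by $\mathrm{D}_A$, but Definition~\ref{def support} still requires the equivalence to hold for \emph{every} $k$-ary boolean function, which is why the ``otherwise'' branch wraps the computed expression inside a singleton so that $\mathrm{h}_A$ unwraps it back to the syntactically identical expression $\mathrm{f}_e(\mathrm{h}_A(\mathcal{E}_1),\ldots,\mathrm{h}_A(\mathcal{E}_k))$. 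Once this design choice is noted, the verification is routine.
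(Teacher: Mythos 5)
Your proposal is correct and follows essentially the same route as the paper: a direct verification of the three support conditions by computing the languages of the images under $\mathrm{h}_A$ (with the same case split on $\vee$ versus the other boolean operators), followed by a structural induction matching the clauses of $\mathrm{D}_A$ against those of $\frac{\partial}{\partial_a}$. If anything, your Part~2 is more explicit than the paper's, which simply asserts that the induction "can be shown" since $\cup$ and $\cdot_{\mathbb{E}}$ are the operations used in partial derivation.
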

   \begin{proof}
     Let $\mathcal{E}_1$ and $\mathcal{E}_2$ be two sets of simple regular expressions and $E$ be a simple regular expression. The condition \textbf{(1)} of Definition~\ref{def support} is satisfied since:
     
     \textbf{(a)} if $k=2$ and $\mathrm{f}=\vee$,
     
     \centerline{
       \begin{tabular}{l@{\ }l}
         $L(\mathrm{h_A}(\mathcal{E}_1\cup\mathcal{E}_2))$ & $=L(\sum_{E \in \mathcal{E}_1\cup\mathcal{E}_2} E)$\\
         & $=\bigcup_{E \in \mathcal{E}_1\cup\mathcal{E}_2} L(E)$\\
         & $=\bigcup_{E_1 \in \mathcal{E}_1} L(E_1) \cup \bigcup_{E_2 \in \mathcal{E}_2} L(E_2)$\\
         & $=L(\sum_{E_1 \in \mathcal{E}_1} E_1+\sum_{E_2 \in \mathcal{E}_2} E_2)$\\
         & $=L(\mathrm{h_A}(\mathcal{E}_1)+\mathrm{h_A}(\mathcal{E}_2))$
       \end{tabular}
       } 
     
     \textbf{(b)} and otherwise,
     
     \centerline{
       \begin{tabular}{l@{\ }l}
         $L(\mathrm{h_A}(\mathrm{f}_{\mathbb{E}}(\mathcal{E}_1,\ldots,\mathcal{E}_k))$ & $=L(\mathrm{h_A}(\{
         \mathrm{f}_{e}(\mathrm{h_A}(\mathcal{E}_1),\ldots,\mathrm{h_A}(\mathcal{E}_k))
         \}))$\\
         & $=L(
         \mathrm{f}_{e}(\mathrm{h_A}(\mathcal{E}_1),\ldots,\mathrm{h_A}(\mathcal{E}_k))
         )$\\
       \end{tabular}
       }  
       
     The condition \textbf{(2)} of Definition~\ref{def support} is satisfied since:
     
     \centerline{
       \begin{tabular}{l@{\ }l}
         $L(\mathrm{h}_A(\mathcal{E}_1\cdot_{\mathbb{E}} E))$ & $= L(\mathrm{h}_A(\bigcup_{E_1\in\mathcal{E}_1}\{E_1\cdot E\}))$\\
         & $= L(\sum_{E_1\in\mathcal{E}_1} E_1\cdot E)$\\
         & $= L(\sum_{E_1\in\mathcal{E}_1} E_1) \cdot L(E)$\\
         & $= L(\mathrm{h}_A(\mathcal{E}_1)) \cdot L(E)$\\
       \end{tabular}
       }

     The condition \textbf{(3)} of Definition~\ref{def support} is satisfied since:
     
     \centerline{
       \begin{tabular}{l@{\ }l}
         $L(\mathrm{h}_A(\{1\}))$ & $= L(1)$\\
       \end{tabular}
       } 
     
     \centerline{
       \begin{tabular}{l@{\ }l}
         $L(\mathrm{h}_A(\emptyset))$ & $=\emptyset=L(0)$\\
       \end{tabular}
       } 
       
     Consequently, $\mathbb{S}_A$ is a support. Moreover, since the operators $\cup$ and $\cdot_{\mathbb{E}}$ in $\mathcal{O}_A$ are the operations used in partial derivation, it can be shown by induction that for any simple regular expression $E$ over $\Sigma$ and for any symbol $a$, $\mathrm{D}_A(a,E)=\frac{\partial}{\partial_a}(E)$.
   \end{proof}
     
   \begin{definition}\label{def support brzo}
        We denote by $\mathbb{S}_B=(\Sigma,\mathbb{E}'=\mathrm{Exp}(\Sigma),\mathrm{h}_B,\mathcal{O}_B,1,0)$ the $6$-tuple defined by:
    \begin{itemize}
      \item for any $E \in \mathrm{Exp}(\Sigma)$, $\mathrm{h}_B(E)=E$,
      \item $\mathcal{O}_B=\{\mathrm{f}_{\mathbb{E}'}\mid \mathrm{f}\text{ is a } k\text{-ary boolean function}\}\cup\{\cdot_{\mathbb{E}'}\}$, where for any $E_1,\ldots,E_k$ elements in $\mathrm{Exp}(\Sigma)$,
      \begin{itemize}
        \item $E \cdot_{\mathbb{E}'} F=E\cdot F$,
        \item $\mathrm{f}_{\mathbb{E}'}(E_1,\ldots,E_k)=E_1 + E_2$ if $k=2$ and $\mathrm{f}=\vee$,
        \item $\mathrm{f}_{\mathbb{E}'}(E_1,\ldots,E_k)=\mathrm{f}_e(E_1,\ldots,E_k)$ otherwise.
      \end{itemize}
    \end{itemize}
   \end{definition}
   
   \begin{proposition}
     The $6$-tuple $\mathbb{S}_B$ is a support. Furthermore, for any regular expression $E$ over $\Sigma$, for any symbol $a$, it holds $\mathrm{D}_B(a,E)=\frac{d}{d_a}(E)$, where  $\mathrm{D}_B$ is the derivation \emph{via} $\mathbb{S}_B$.
   \end{proposition}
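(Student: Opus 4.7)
The plan is to follow the proof schema of Proposition~\ref{prop supp anti est supp}: first check the three conditions of Definition~\ref{def support} for $\mathbb{S}_B$, then establish by structural induction on $E$ that $\mathrm{D}_B(a,E)=\frac{d}{d_a}(E)$.

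The verification of the support conditions is essentially trivial here because, unlike for $\mathbb{S}_A$, the mapping $\mathrm{h}_B$ is the identity on $\mathrm{Exp}(\Sigma)$ and the operators in $\mathcal{O}_B$ have been designed to coincide syntactically with the regular-expression constructors. For condition~(1) I would distinguish the case $(k=2,\mathrm{f}=\vee)$, where $\mathrm{h}_B(\mathrm{f}_{\mathbb{E}'}(E_1,E_2))=E_1+E_2=\mathrm{f}_e(\mathrm{h}_B(E_1),\mathrm{h}_B(E_2))$, from the generic case, where $\mathrm{h}_B(\mathrm{f}_{\mathbb{E}'}(E_1,\ldots,E_k))=\mathrm{f}_e(E_1,\ldots,E_k)=\mathrm{f}_e(\mathrm{h}_B(E_1),\ldots,\mathrm{h}_B(E_k))$; both sides are actually syntactically identical, so equivalence holds \emph{a fortiori}. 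Conditions~(2) and~(3) are then immediate: $\mathrm{h}_B(E\cdot_{\mathbb{E}'}F)=E\cdot F=\mathrm{h}_B(E)\cdot F$ by definition of $\cdot_{\mathbb{E}'}$, while $\mathrm{h}_B(1_{\mathbb{E}'})=1$ and $\mathrm{h}_B(0_{\mathbb{E}'})=0$ follow from $1_{\mathbb{E}'}=1$, $0_{\mathbb{E}'}=0$ and the identity nature of $\mathrm{h}_B$.

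For the second part, I would perform a structural induction on $E$. The base cases $E=a$, $E=b\in\Sigma\setminus\{a\}$, $E=1$ and $E=0$ are handled by a direct comparison of the two definitions, which return in each case the same element ($1_{\mathbb{E}'}=1$ or $0_{\mathbb{E}'}=0$). In the inductive cases ($E=E_1\cdot E_2$, $E=E_1^{*}$, $E=\mathrm{f}_e(E_1,\ldots,E_k)$), I would apply the induction hypothesis to the sub-expressions and then observe that the operators $\cdot_{\mathbb{E}'}$ and $\vee_{\mathbb{E}'}$ used by $\mathrm{D}_B$ are precisely the operators $\cdot$ and $+$ appearing in Brzozowski's definition; the product-rule case split on $\mathrm{r}_\varepsilon(E_1)$ matches in both settings since $\mathrm{r}_\varepsilon(E_1)=1$ iff $\varepsilon\in L(E_1)$.

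I do not anticipate a genuine obstacle: the support $\mathbb{S}_B$ has been engineered so that $\mathrm{D}_B$ literally unfolds to Brzozowski's derivative. The only point that needs a touch of care is the dual treatment of $\vee$ versus other $k$-ary boolean operators in Definition~\ref{def support brzo}, but in both sub-cases the two sides of condition~(1) are syntactically equal under $\mathrm{h}_B$, so no real equivalence argument beyond reflexivity is required.
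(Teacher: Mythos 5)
Your proposal is correct and follows exactly the same route as the paper's own (much terser) proof: the support conditions hold trivially because $\mathrm{h}_B$ is the identity and the operators of $\mathcal{O}_B$ coincide with the regular-expression constructors, and the identity $\mathrm{D}_B(a,E)=\frac{d}{d_a}(E)$ is obtained by the same structural induction the paper merely asserts. Your version simply spells out the case analysis that the paper leaves implicit.
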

   \begin{proof}
     Since $\mathrm{h}_B$ is the identity and $\mathbb{E}'=\mathrm{Exp}(\Sigma)$, it is obvious that $\mathbb{S}_B$ is a support. Moreover, since the operators in $\mathcal{O}_B$ are the operators of regular expressions, it can be shown by induction that for any regular expression $E$ over $\Sigma$ and for any symbol $a$, $\mathrm{D}_B(a,E)=\frac{d}{d_a}(E)$.
   \end{proof}
   
    Let us notice that, by definition, the derivation $\mathrm{D}_A$ more generally addresses  unrestricted expressions ; therefore it provides a natural extension for Antimirov derivation. See~\cite{CCM11b} and Section~\ref{sec deriv via set clausal form} for alternative extensions.

 \subsection{Main Properties of Supports}
  
We now show that the language denoted by the expression associated with any derivative $D(w,E)$ is equal to the corresponding quotient. 

 \begin{proposition}\label{preservation du langage}
   Let $\mathrm{D}$ be the derivation via a support $\mathbb{S}=(\Sigma,\mathbb{E},\mathrm{h},\mathcal{O},1_\mathbb{E},$ $0_\mathbb{E})$. Then for any word $w$ in $\Sigma^{+}$, for any expression $E$ in $\mathrm{Exp}(\Sigma)$, it holds:
   
   \centerline{$L(\mathrm{h}(\mathrm{D}(w,E)))=w^{-1}(L(E))$.}
 \end{proposition}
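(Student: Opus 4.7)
The plan is to prove the statement by strong induction on $|w|$. The base case $|w|=1$ (that is, $w=a\in\Sigma$) will itself be handled by a secondary structural induction on $E$, and the inductive step will be a short reduction using the recursive clause $\mathrm{D}(au,E)=\mathrm{D}(u,\mathrm{h}(\mathrm{D}(a,E)))$ together with the base case and the already-known inductive formula $(a\cdot u)^{-1}(L)=u^{-1}(a^{-1}(L))$.

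For the base case I would walk through the clauses of the definition of $\mathrm{D}(a,E)$ and match them one by one with the inductive definition of $a^{-1}(L)$ given in the preliminaries, using the three conditions of Definition~\ref{def support} as the glue. For $E=\mathrm{f}_e(E_1,\ldots,E_k)$, condition \textbf{(1)} gives $\mathrm{h}(\mathrm{f}_{\mathbb{E}}(\mathrm{D}(a,E_1),\ldots,\mathrm{D}(a,E_k)))\sim \mathrm{f}_e(\mathrm{h}(\mathrm{D}(a,E_1)),\ldots,\mathrm{h}(\mathrm{D}(a,E_k)))$; taking languages and using the structural induction hypothesis on each $E_i$ yields $\mathrm{f}_L(a^{-1}(L(E_1)),\ldots,a^{-1}(L(E_k)))=a^{-1}(L(E))$. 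For $E=E_1\cdot E_2$ the two subcases on whether $\varepsilon\in L(E_1)$ are handled by combining condition \textbf{(2)} (which turns $\mathrm{h}(\mathrm{D}(a,E_1)\cdot_{\mathbb{E}} E_2)$ into $\mathrm{h}(\mathrm{D}(a,E_1))\cdot E_2$) with condition \textbf{(1)} applied to $\vee$ in the nullable case, and then invoking the structural hypothesis on $E_1$ and $E_2$. The case $E=E_1^*$ is analogous, using condition \textbf{(2)} once. The atomic cases $E=a$, $E\in(\Sigma\setminus\{a\})\cup\{0,1\}$ follow directly from condition \textbf{(3)} giving $L(\mathrm{h}(1_{\mathbb{E}}))=\{\varepsilon\}$ and $L(\mathrm{h}(0_{\mathbb{E}}))=\emptyset$, matching the corresponding clauses in $a^{-1}(L)$.

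For the inductive step, assume $w=au$ with $u\in\Sigma^+$ and that the proposition holds for all words of length $|u|$. By definition $\mathrm{D}(w,E)=\mathrm{D}(u,\mathrm{h}(\mathrm{D}(a,E)))$, so applying the induction hypothesis with the expression $F:=\mathrm{h}(\mathrm{D}(a,E))$ gives $L(\mathrm{h}(\mathrm{D}(w,E)))=u^{-1}(L(F))$. But the base case supplies $L(F)=L(\mathrm{h}(\mathrm{D}(a,E)))=a^{-1}(L(E))$, and the preliminaries give $u^{-1}(a^{-1}(L(E)))=(au)^{-1}(L(E))=w^{-1}(L(E))$, closing the induction.

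The only delicate point is the boolean operator case of the base induction: the support axiom only provides equivalence of expressions, not syntactic equality, and we must be careful that the operator $\mathrm{f}_e$ on the expression side really does correspond to the operator $\mathrm{f}_L$ applied to the quotients $a^{-1}(L(E_i))$. Since $\mathrm{f}_L$ is defined pointwise from the boolean $\mathrm{f}$ exactly as $L(\mathrm{f}_e(\cdot))$ is, and since the clause $a^{-1}(\mathrm{f}_L(L_1,\ldots,L_n))=\mathrm{f}_L(a^{-1}(L_1),\ldots,a^{-1}(L_n))$ is part of the preliminaries, this step reduces to bookkeeping — but it is where one must be most careful.
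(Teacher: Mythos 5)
Your proposal is correct and follows essentially the same route as the paper: a recurrence on the length of $w$ whose base case $w=a$ is a structural induction on $E$ matching each clause of $\mathrm{D}(a,\cdot)$ against the inductive definition of $a^{-1}(\cdot)$ via the three support conditions, and whose inductive step unfolds $\mathrm{D}(au,E)=\mathrm{D}(u,\mathrm{h}(\mathrm{D}(a,E)))$ and uses $(au)^{-1}(L)=u^{-1}(a^{-1}(L))$. The delicate point you flag (equivalence rather than syntactic equality in condition \textbf{(1)}) is handled in the paper exactly as you describe, by passing to languages immediately.
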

 \begin{proof}
 
   By recurrence over the length of $w$.
   
   \begin{enumerate}
     \item Let $w=a\in\Sigma$. By induction over the structure of $E$.

    \centerline{
  $L(\mathrm{h}(\mathrm{D}(a,a)))=L(\mathrm{h}(1_\mathbb{E}))=\{\varepsilon\}=a^{-1}(L(a))$\\
    }
\medskip

    \centerline{
     \begin{tabular}{l@{\ }l}
      &$L(\mathrm{h}(\mathrm{D}(a,b)))=L(\mathrm{h}(\mathrm{D}(a,1)))=L(\mathrm{h}(\mathrm{D}(a,0)))$\\
      $=$& $L(\mathrm{h}(0_\mathbb{E}))=\emptyset=a^{-1}(L(b))=a^{-1}(L(1))=a^{-1}(L(0))$\\
     \end{tabular}
    }

 \medskip 
 
    \centerline{
     \begin{tabular}{l@{\ }l}
     & $L(\mathrm{h}(\mathrm{D}(a,E_1^*)))=L(\mathrm{h}(\mathrm{D}(a,E_1) \cdot_{\mathbb{E}} E_1^*))$\\
       $=$&$L(\mathrm{h}(\mathrm{D}(a,E_1)))\cdot L(E_1^*)=a^{-1}(L(E_1))\cdot L(E_1^*)=a^{-1}(L(E_1^*))$\\
     \end{tabular}
    }
  \medskip

       \centerline{
     \begin{tabular}{llll}
      &$L(\mathrm{h}(\mathrm{D}(a,\mathrm{f}_e(E_1,\ldots,E_k))))=L(\mathrm{h}(\mathrm{f}_\mathbb{E}(\mathrm{D}(a,E_1),\ldots,\mathrm{D}(a,E_k))))$\\
       $=$&$\mathrm{f}_L(L(\mathrm{h}(\mathrm{D}(a,E_1))),\ldots,L(\mathrm{h}(\mathrm{D}(a,E_k))))$\\
        $=$&$\mathrm{f}_L(a^{-1}(L(E_1)),\ldots,a^{-1}(L(E_k)))$\\
       $=$&$a^{-1}(\mathrm{f}_L(L(E_1),\ldots,L(E_k)))=a^{-1}(L(\mathrm{f}_e(E_1,\ldots,E_k)))$\\\
     \end{tabular}
    }
\medskip
    
    Let us consider that $\varepsilon\in L(E_1)$:

\centerline{
       \begin{tabular}{lll}
      &$L(\mathrm{h}(\mathrm{D}(a,E_1 \cdot E_2)))=L(\mathrm{h}((\mathrm{D}(a,E_1) \cdot_\mathbb{E} E_2) +_\mathbb{E} \mathrm{D}(a,E_2)))$\\
      $=$&$L(\mathrm{h}(\mathrm{D}(a,E_1) \cdot_\mathbb{E} E_2))\cup  L(\mathrm{h}(\mathrm{D}(a,E_2)))$\\
      $=$&$L(\mathrm{h}(\mathrm{D}(a,E_1))) \cdot  L(E_2)\cup  L(\mathrm{h}(\mathrm{D}(a,E_2)))$\\
      $=$&$a^{-1}(L(E_1)) \cdot  L(E_2)\cup  a^{-1}(L(E_2))$\\
      $=$&$a^{-1}(L(E_1) \cdot  L(E_2))=a^{-1}(L(E_1 \cdot E_2))$\\
\end{tabular}}
\medskip
    
    Let us consider that $\varepsilon\notin L(E_1)$:

\begin{center}
     \begin{tabular}{llll}
     &$L(\mathrm{h}(\mathrm{D}(a,E_1 \cdot E_2)))$&$=$&$L(\mathrm{h}(\mathrm{D}(a,E_1) \cdot_\mathbb{E} E_2))$\\
     $=$&$L(\mathrm{h}(\mathrm{D}(a,E_1))) \cdot  L(E_2)$&$=$&$a^{-1}(L(E_1)) \cdot  L(E_2)$\\
       $=$&$a^{-1}(L(E_1) \cdot  L(E_2))$&$=$&$a^{-1}(L(E_1 \cdot E_2))$\\
       \end{tabular}
    \end{center}   

     \item Let $w=au$ with $a\in\Sigma$ and $u\in\Sigma^{+}$. According to the recurrence hypothesis, 

     \begin{center}
       \begin{tabular}{llll}
         &$L(\mathrm{h}(\mathrm{D}(w,E)))$&$=$&$L(\mathrm{h}(\mathrm{D}(u,\mathrm{h}(\mathrm{D}(a,E)))))$\\
          $=$&$u^{-1}(L(\mathrm{h}(\mathrm{D}(a,E))))$&$=$&$u^{-1}(a^{-1}(L(E)))=(au)^{-1}(L(E))$\\
       \end{tabular}
     \end{center}

   \end{enumerate}
   
   \end{proof}
 
From Proposition \ref{preservation du langage} we deduce that $r_{\varepsilon}(h(D(w,E)))=r_{\varepsilon}(w^{-1}L(E))$. This property does not depend whether the derivation is finite or not
and since the boolean $r_{\varepsilon}(E)$ can be inductively computed for any regular expression $E$, any support defines a syntactical solution of the membership problem of the language $L(E)$.

  \begin{corollary}
    For a given regular expression $E$, any derivation via a support can be used to solve the membership problem for $L(E)$.
  \end{corollary}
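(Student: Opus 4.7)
The plan is to reduce the membership question ``$w \in L(E)$?'' to the nullability test ``$r_\varepsilon(F)=1$?'' for a concretely constructible regular expression $F$, using Proposition~\ref{preservation du langage} as the bridge. The key observation is that for any language $L$ and any word $w$, we have $w \in L$ iff $\varepsilon \in w^{-1}(L)$.

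I would split on the length of $w$. If $w=\varepsilon$, then $w \in L(E)$ iff $r_\varepsilon(E)=1$, and the inductive formulas for $r_\varepsilon$ given in Section~\ref{sec prelim} compute this boolean in finitely many steps from the syntactic structure of $E$. If $w \in \Sigma^+$, I would first compute the structure $\mathrm{D}(w,E) \in \mathbb{E}$ by unfolding the inductive definition of derivation \emph{via} $\mathbb{S}$; applying $\mathrm{h}$ yields a regular expression $F = \mathrm{h}(\mathrm{D}(w,E)) \in \mathrm{Exp}(\Sigma)$. By Proposition~\ref{preservation du langage}, $L(F) = w^{-1}(L(E))$, so
\centerline{$w \in L(E) \iff \varepsilon \in w^{-1}(L(E)) \iff \varepsilon \in L(F) \iff r_\varepsilon(F)=1$,}
and the last boolean is again computed by the inductive rules for $r_\varepsilon$ on $F$.

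Each of the three ingredients is already in hand: the recursive unfolding of $\mathrm{D}(w,E)$ on a fixed word $w$ terminates (its definition is primitive recursive on $|w|$ and on the structure of $E$ for the single-letter case), the map $\mathrm{h}$ is given by the support, and $r_\varepsilon$ is syntactically computable on any regular expression. Hence the whole procedure is effective regardless of whether the derivation is finite, which is precisely the content of the remark preceding the corollary.

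I do not expect a real obstacle here; the statement is essentially a rephrasing of Proposition~\ref{preservation du langage} combined with the syntactic computability of $r_\varepsilon$. The only subtlety worth flagging is that finiteness of the derivative set is not required: we are only asked for a decision procedure on a single input word $w$, not a finite automaton recognising $L(E)$, so an unbounded number of distinct structures $\mathrm{D}(w,E)$ across varying $w$ causes no problem.
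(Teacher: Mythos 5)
Your argument is correct and follows exactly the paper's route: apply Proposition~\ref{preservation du langage} to get $L(\mathrm{h}(\mathrm{D}(w,E)))=w^{-1}(L(E))$ and then decide membership by the syntactic computation of $r_\varepsilon$ on $\mathrm{h}(\mathrm{D}(w,E))$, noting that finiteness of the derivation is irrelevant. Your explicit handling of the $w=\varepsilon$ case is a minor completeness detail the paper leaves implicit, but it does not change the approach.
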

  
As an example, the support $\mathbb{S}_B$ of Definition~\ref{def support brzo} can be used to solve the membership test, even if the associated derivation is not finite. 
  
Given an expression, the finiteness of the set of its derivatives
is a necessary condition for the construction of an associated finite automaton.
It is well-known that the set of Brzozowski's derivatives is not necessarily finite whereas the set of
dissimilar derivatives
and the set of 
Antimirov's derived
terms are finite sets.
We now give two sufficient conditions of finiteness in the general case. The first one has already be stated in the case of Brzozowski derivatives~\cite{Brz64}. The second one is related to the mapping $\mathrm{h}$ of the support, that needs to satisfy specific properties. 

    \begin{proposition}\label{prop finitude}
    Let $\mathrm{D}$ be the derivation via a support $\mathbb{S}=(\Sigma,\mathbb{E},\mathrm{h},\mathcal{O},1_\mathbb{E},0_\mathbb{E})$. The following set of conditions is sufficient for the mapping $\mathrm{D}$ to be a finite derivation:
    
      \begin{enumerate}
        \item $\vee_{\mathbb{E}}$ is associative, commutative and idempotent (\textbf{H1});
        \item for any $k$-ary boolean function $\mathrm{f}$, for any $k$ elements $\mathcal{E}_1,\ldots,\mathcal{E}_k$ in $\mathbb{E}$,
        
        \centerline{
          $\mathrm{D}(a,\mathrm{h}(\mathrm{f}_{\mathbb{E}}(\mathcal{E}_1,\ldots,\mathcal{E}_k)))=\mathrm{f}_{\mathbb{E}}(\mathrm{D}(a,\mathrm{h}(\mathcal{E}_1)),\ldots,\mathrm{D}(a,\mathrm{h}(\mathcal{E}_k)))$ (\textbf{H2}).
        }
      \end{enumerate}
      
  \end{proposition}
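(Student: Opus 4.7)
The plan is to prove, by structural induction on $E$, that $\mathcal{D}(E):=\{\mathrm{D}(w,E)\mid w\in\Sigma^+\}$ is finite. The overall strategy mirrors Brzozowski's original argument for dissimilar derivatives, lifted to a general support: the goal is to express each derivative of a compound expression as a $\vee_{\mathbb{E}}$-combination drawn from a bounded family of subterm derivatives, then invoke \textbf{H1} to collapse such combinations down to finitely many distinct elements of $\mathbb{E}$.

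The core of the argument will consist of three structural lemmas, each proved by induction on $|w|$. First, a commutation lemma: for $E=\mathrm{f}_e(E_1,\ldots,E_k)$, $\mathrm{D}(w,E)=\mathrm{f}_{\mathbb{E}}(\mathrm{D}(w,E_1),\ldots,\mathrm{D}(w,E_k))$. The base case is the definition of $\mathrm{D}$; for the step case $w=au$ one rewrites $\mathrm{D}(au,E)=\mathrm{D}(u,\mathrm{h}(\mathrm{f}_{\mathbb{E}}(\mathrm{D}(a,E_1),\ldots,\mathrm{D}(a,E_k))))$ and iterates \textbf{H2} letter by letter through $u$ to pull $\mathrm{f}_{\mathbb{E}}$ out. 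Second, for $E=E_1\cdot E_2$, each $\mathrm{D}(w,E)$ is either $\mathrm{D}(w,E_1)\cdot_{\mathbb{E}} E_2$ or its $\vee_{\mathbb{E}}$-combination with finitely many $\mathrm{D}(v,E_2)$ where $v$ ranges over non-empty suffixes of $w$. Third, for $E=E_1^*$, $\mathrm{D}(w,E)$ is a $\vee_{\mathbb{E}}$-combination of terms $\mathrm{D}(u,E_1)\cdot_{\mathbb{E}} E_1^*$ where $u$ ranges over non-empty suffixes of $w$. The second and third are proved by induction on $|w|$: at each new letter one unfolds the definition of $\mathrm{D}$ on concatenation or star to re-expose the canonical form, and uses \textbf{H2} to push the fresh derivation step through the outer $\vee_{\mathbb{E}}$ layers accumulated so far.

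With these lemmas in hand, the main induction closes cleanly. The boolean case follows directly from the commutation lemma combined with the inductive hypotheses $|\mathcal{D}(E_i)|<\infty$, since $\mathcal{D}(\mathrm{f}_e(E_1,\ldots,E_k))\subseteq \mathrm{f}_{\mathbb{E}}(\mathcal{D}(E_1),\ldots,\mathcal{D}(E_k))$. For $E=E_1\cdot E_2$, \textbf{H1} ensures that any $\vee_{\mathbb{E}}$-combination of elements of $\mathcal{D}(E_2)$ is determined by the underlying \emph{subset}, yielding at most $2^{|\mathcal{D}(E_2)|}$ possibilities; combined with the $|\mathcal{D}(E_1)|$ choices for the left factor $\mathrm{D}(w,E_1)\cdot_{\mathbb{E}} E_2$, this bounds $|\mathcal{D}(E_1\cdot E_2)|$. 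The star case is handled analogously. The base cases $E\in\{a,1,0\}$ reduce to a direct verification of the finite orbit of $\mathrm{h}(1_{\mathbb{E}})$ and $\mathrm{h}(0_{\mathbb{E}})$ under further derivation.

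The main obstacle will be the two structural lemmas for concatenation and star: iterating $\mathrm{D}$ eventually forces rederivation of expressions of the form $\mathrm{h}(\mathcal{E}\cdot_{\mathbb{E}} F)$, and condition \textbf{(2)} of a support only guarantees language equivalence---not syntactic equality---with $\mathrm{h}(\mathcal{E})\cdot F$. The induction must therefore proceed delicately, tracking at each letter the particular expression produced by $\mathrm{h}$ and unfolding the derivation rule of concatenation or star directly on that expression to re-expose the canonical form. \textbf{H2} is what sustains the normal form through every step, while \textbf{H1} is what ultimately makes the family of normal forms finite.
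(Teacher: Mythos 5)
Your proposal follows essentially the same route as the paper: a structural induction on $E$, with inner inductions on $|w|$ establishing that derivatives of $\mathrm{f}_e(E_1,\ldots,E_k)$ commute with $\mathrm{f}_{\mathbb{E}}$ (via \textbf{H2}) and that derivatives of $E_1\cdot E_2$ and $E_1^*$ are $\vee_{\mathbb{E}}$-combinations over suffix-derivatives of the subexpressions, then using \textbf{H1} to bound the number of such combinations by the number of underlying subsets (yielding the same $2^{|\cdot|}$-type bounds). The decomposition, the role assigned to each hypothesis, and even the difficulty you flag about $\mathrm{h}$ giving only language equivalence match the paper's argument.
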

  \begin{proof}  
    Let us show by induction over the structure of regular expressions that for any expression $E$ in $\mathrm{Exp}(\Sigma)$, the set $\{\mathrm{D}(w,E)\mid w\in\Sigma^{+}\}$ is finite.

     If  $\mathbf{E\in\Sigma\cup\{1,0\}}$:      
        According to the definition of derivation, the proposition holds. \\

       If $\mathbf{E=E_1^{*}}$:
        Let $w$ be a word in $\Sigma^{+}$. 
        Let us show by recurrence over the length of $w$ that $\mathrm{D}(w,E_1^{*})$ is a finite $\vee_{\mathbb{E}}$-combination of elements in the set $\{\mathrm{D}(w',E_1) \cdot_{\mathbb{E}} E_1^* \mid w'\neq\varepsilon\text{ is a suffix of }w\}$.
          \begin{enumerate}
            \item Let $w=a\in\Sigma$. Since $\mathrm{D}(a,E_1^{*})=\mathrm{D}(a,E_1) \cdot_{\mathbb{E}} E_1^*$, the property holds.
            \item Let $w=ua$ with $a\in\Sigma$ and $u\in\Sigma^{+}$. By definition, $\mathrm{D}(ua,E_1^{*})=\mathrm{D}(a,\mathrm{h}(\mathrm{D}(u,E_1^{*})))$. By the recurrence hypothesis, $\mathrm{D}(u,E_1^{*})$ is a finite $\vee_{\mathbb{E}}$-combination of elements in the set $\{\mathrm{D}(w',E_1) \cdot_{\mathbb{E}} E_1^* \mid w'\neq\varepsilon \text{ is a suffix of }u\}$. According to hypothesis \textbf{H2}, $\mathrm{D}(a,\mathrm{h}(\mathrm{D}(u,E_1^{*})))$ is a finite $\vee_{\mathbb{E}}$-combination of elements in the set $\{\mathrm{D}(a,\mathrm{h}(\mathrm{D}(w',E_1^{*}))) \cdot_{\mathbb{E}} E_1^*\mid w'\neq\varepsilon\text{ is a suffix of }u\}\cup \{\mathrm{D}(a,E_1) \cdot_{\mathbb{E}} E_1^* \}$. So, $\mathrm{D}(ua,E_1^{*})$ is a finite $\vee_{\mathbb{E}}$-combination of elements in the set $\{\mathrm{D}(w',E_1) \cdot_{\mathbb{E}} E_1^*\mid w'\neq\varepsilon\text{ is a suffix of }ua\}$.
          \end{enumerate}

        As a consequence, since the set $\{\mathrm{D}(w,E_1)\mid w\in\Sigma^{+}\}$ is a finite set by induction hypothesis, since $\mathrm{Card}(\bigcup_{w\in\Sigma^+}\{\mathrm{D}(w',E_1)\mid w'\neq\varepsilon$ is a suffix of  $w\})=\mathrm{Card}(\bigcup_{w\in\Sigma^+}\{\mathrm{D}(w,E_1)\})$ and since $\vee_{\mathbb{E}}$ is associative, commutative and idempotent, we get:
          
          \centerline{
            $\mathrm{Card}(\{\mathrm{D}(w,E_1^{*})\mid w\in\Sigma^{+}\})\leq 2^{\mathrm{Card}(\{\mathrm{D}(w,E_1)\mid w\in\Sigma^{+}\})}$.
          } 
          
\ \\

    If $\mathbf{E=\mathrm{f}_e(E_1,\ldots,E_k)}$:      
        Let $w$ be a word in $\Sigma^{+}$.
         Let us show by recurrence over the length of $w$ that $\mathrm{D}(w,\mathrm{f}_e(E_1,\ldots,E_k))=\mathrm{f}_{\mathbb{E}}(\mathrm{D}(w,E_1),$ $\ldots,\mathrm{D}(w,E_k))$.
          
          \begin{enumerate}
            \item Let $w=a\in\Sigma$. According to the definition of $\mathrm{D}$, the property holds.
            \item Let $w=ua$ with $a\in\Sigma$ and $u\in\Sigma^{+}$. By definition, $\mathrm{D}(ua,\mathrm{f}_e(E_1,$ $\ldots,E_k))$ $=\mathrm{D}(a,\mathrm{h}(\mathrm{D}(u,\mathrm{f}_e(E_1,\ldots,E_k))))$. By the recurrence hypothesis, $\mathrm{D}(u,\mathrm{f}_e(E_1,$ $\ldots,E_k))=\mathrm{f}_{\mathbb{E}}(\mathrm{D}(u,E_1),\ldots,\mathrm{D}(u,E_k))$. According to hypothesis \textbf{H2},
            
            \centerline{ $\mathrm{D}(a,\mathrm{h}(\mathrm{f}_{\mathbb{E}}(\mathrm{D}(u,E_1),\ldots,\mathrm{D}(u,E_k))))$ $=\mathrm{f}_{\mathbb{E}}(\mathrm{D}(a,\mathrm{h}(\mathrm{D}(u,$ $E_1))),\ldots,\mathrm{D}(a,\mathrm{h}(\mathrm{D}(u,E_k))))$}
            
            \centerline{ $=\mathrm{f}_{\mathbb{E}}(\mathrm{D}(ua,E_1),\ldots,\mathrm{D}(ua,E_k))$.}
            
          \end{enumerate}  
          
         As a consequence, since for all integer $j$ in $\{1,\ldots,k\}$ the set $\{\mathrm{D}(w,E_j)\mid w\in\Sigma^{+}\}$ is finite by induction hypothesis, we get:
       \begin{center}
        $\mathrm{Card}(\{\mathrm{D}(w,\mathrm{f}_e(E_1,\ldots,E_k))\mid w\in\Sigma^{+}\})$\\
        $\leq
         \mathrm{Card}(\{\mathrm{D}(w,E_1)\mid w\in\Sigma^{+}\})\times \cdots \times \mathrm{Card}(\{\mathrm{D}(w,E_k)\mid w\in\Sigma^{+}\})$.
      \end{center}
    
    If $\mathbf{E=E_1\cdot E_2}$:
        Let $w$ be a word in $\Sigma^{+}$.
          Let us show by recurrence over the length of $w$ that either $\mathrm{D}(w,E_1\cdot E_2)=\mathrm{D}(w,E_1) \cdot_{\mathbb{E}} E_2 $ or $\mathrm{D}(w,E_1\cdot E_2)=(\mathrm{D}(w,E_1) \cdot_{\mathbb{E}} E_2) \vee_{\mathbb{E}} \mathcal{E}$ where $\mathcal{E}$ is a finite $\vee_{\mathbb{E}}$-combination of elements in the set $\{\mathrm{D}(w',E_2)\mid w'\neq\varepsilon \text{ is a suffix of }w\}$.
          
          \begin{enumerate}
            \item Let $w=a\in\Sigma$. According to the definition of $\mathrm{D}$, the property holds.
            \item Let $w=ua$ with $a\in\Sigma$ and $u\in\Sigma^{+}$.
            
            By definition, $\mathrm{D}(ua,E_1\cdot E_2)=\mathrm{D}(a,\mathrm{h}(\mathrm{D}(u,E_1\cdot E_2)))$.
            
            Two cases have to be considered:
             
             \begin{enumerate}
               \item $\mathrm{D}(u,E_1\cdot E_2)=\mathrm{D}(u,E_1) \cdot_{\mathbb{E}} E_2$. Either $\mathrm{D}(ua,E_1\cdot E_2)=\mathrm{D}(a,\mathrm{h}(\mathrm{D}(u,$ $E_1) \cdot_{\mathbb{E}} E_2))$, or $\mathrm{D}(ua,E_1\cdot E_2)=(\mathrm{D}(a,\mathrm{h}(\mathrm{D}(u,E_1) \cdot_{\mathbb{E}} E_2 )))  \vee_{\mathbb{E}} \mathrm{D}(a,\mathrm{h}(E_2)) $. According to the recurrence hypothesis, both of these cases satisfy the proposition.
               
               \item $\mathrm{D}(u,E_1\cdot E_2)=(\mathrm{D}(u,E_1) \cdot_{\mathbb{E}} E_2 ) \vee_{\mathbb{E}} \mathcal{E} $ where $\mathcal{E}$ is a finite $\vee_{\mathbb{E}}$-combination of elements in the set $\{\mathrm{D}(w',E_2)\mid w'\neq\varepsilon\text{ is a suffix of }u\}$. Either $\mathrm{D}(ua,E_1\cdot E_2)=\mathrm{D}(a,\mathrm{h}(\mathrm{D}(u,E_1) \cdot_{\mathbb{E}} E_2 )) \vee_{\mathbb{E}} \mathrm{D}(a,\mathrm{h}(\mathcal{E})) $ or $\mathrm{D}(ua,E_1\cdot E_2)=\mathrm{D}(a,\mathrm{h}(\mathrm{D}(u,E_1) \cdot_{\mathbb{E}} E_2 ))               
               \vee_{\mathbb{E}} \mathrm{D}(a,\mathrm{h}(E_2))               
               \vee_{\mathbb{E}} \mathrm{D}(a,\mathrm{h}(\mathcal{E}))$. According to hypothesis \textbf{H2}, $\mathcal{E}'=\mathrm{D}(a,\mathrm{h}(\mathcal{E}))$ is a finite $\vee_{\mathbb{E}}$-combination of elements in the set $\{\mathrm{D}(a,\mathrm{h}(\mathrm{D}(w',E_2)))\mid w'\neq\varepsilon \text{ is a suffix of }u\}$, set that equals $\{\mathrm{D}(w',E_2)\mid w'\neq\varepsilon$ is a suffix of $ua\}$.
             \end{enumerate} 
             
             Consequently, $\mathrm{D}(ua,E_1\cdot E_2)=(\mathrm{D}(ua,E_1) \cdot_{\mathbb{E}} E_2 ) \vee_{\mathbb{E}}   \mathcal{E} $ where $\mathcal{E}$ is a finite $\vee_{\mathbb{E}}$-combination of elements in the set $\{\mathrm{D}(w',E_2)\mid w'\neq\varepsilon\text{ is a suffix of }ua\}$. 
          \end{enumerate} 
          
          As a consequence, since the sets $\{\mathrm{D}(w,E_1)\mid w\in\Sigma^{+}\}$ and $\{\mathrm{D}(w,E_2)\mid w\in\Sigma^{+}\}$ are finite by induction hypothesis and since $\vee_{\mathbb{E}}$ is associative, commutative and idempotent,  we get:
          
         \begin{center}
            $\mathrm{Card}(\{\mathrm{D}(w,E_1\cdot E_2)\mid w\in\Sigma^{+}\})$\\
            $\leq \mathrm{Card}(\{\mathrm{D}(w,E_1)\mid w\in\Sigma^{+}\})\times 2^{\mathrm{Card}(\{\mathrm{D}(w,E_2)\mid w\in\Sigma^{+}\})}$.
         \end{center}
         
  \end{proof}
  

    The derivation $\mathrm{D}_A$ of Definition~\ref{def support anti} is an example of finite derivation since $\cup$ is associative, commutative and idempotent, and since for any $k$-ary boolean function $\mathrm{f}$ and for any $k$ elements $\mathcal{E}_1,\ldots,\mathcal{E}_k$ in $2^{\mathrm{Exp}(\Sigma)}$, we have:
        
        \centerline{
          $\mathrm{D}_A(a,\mathrm{h}_A(\mathrm{f}_{\mathbb{E}}(\mathcal{E}_1,\ldots,\mathcal{E}_k)))=\mathrm{f}_{\mathbb{E}}(\mathrm{D}_A(a,\mathrm{h}_A(\mathcal{E}_1)),\ldots,\mathrm{D}_A(a,\mathrm{h}_A(\mathcal{E}_k)))$.
        }
      
    On the opposite, since $+$ is not an ACI law, Proposition~\ref{prop finitude} does not allow us to conclude for the derivation $\mathrm{D}_B$ of Definition~\ref{def support brzo}. Brzozowski showed in~\cite{Brz64} that it is possible to compute a finite set of dissimilar derivatives using a quotient of the expressions  w.r.t. an ACI sum. It can be achieved by considering the support $\mathbb{S}'_B$ defined as follows:
    
    \begin{definition}\label{def support brzo aci}
      We denote by $\mathbb{S}'_B=(\Sigma,\mathbb{E}''=2^{\mathrm{Exp}(\Sigma)},\mathrm{h}_A,\mathcal{O}'_B,\{1\},\{0\})$ the $6$-tuple defined by:
    \begin{itemize}
      \item for any $\mathcal{E}\in 2^{\mathrm{Exp}(\Sigma)}$, $\mathrm{h}_A(\mathcal{E})=\sum_{E\in\mathcal{E}} E$ (Definition~\ref{def support anti}),
      \item $\mathcal{O}'_B=\{\mathrm{f}_{\mathbb{E}''}\mid \mathrm{f}\text{ is a } k\text{-ary boolean function}\}\cup\{\cdot_{\mathbb{E}''}\}$, where for any $\mathcal{E}_1,\ldots,\mathcal{E}_k$ elements in $\mathbb{E}''$,
      \begin{itemize}
        \item $\mathcal{E}_1 \cdot_{\mathbb{E}''} F=\{(\sum_{E\in\mathcal{E}_1} E)\cdot F\}$,
        \item $\mathrm{f}_{\mathbb{E}''}(\mathcal{E}_1,\ldots,\mathcal{E}_k)=\mathcal{E}_1\cup \mathcal{E}_2$ if $k=2$ and $\mathrm{f}=\vee$,
        \item $\mathrm{f}_{\mathbb{E}''}(\mathcal{E}_1,\ldots,\mathcal{E}_k)=\{\mathrm{f}_e(\mathrm{h}_A(\mathcal{E}_1),\ldots,\mathrm{h}_A(\mathcal{E}_k))\}$ otherwise.
      \end{itemize}
    \end{itemize}
    \end{definition}
    
    \begin{proposition}
      The $6$-tuple $\mathbb{S}'_B$ is a support. Furthermore, for any regular expression $E$ and for any symbol $a$, it holds that $\mathrm{h}_A(\mathrm{D}'_{B}(a,E))$ is the dissimilar derivative of $E$ w.r.t. $a$, where $\mathrm{D}'_{B}$ is the derivation \emph{via} the support $\mathbb{S}'_B$.
    \end{proposition}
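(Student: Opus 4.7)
The plan is to handle the two parts of the proposition separately: first verify the three axioms of Definition~\ref{def support} for $\mathbb{S}'_B$, then prove the identity between $\mathrm{h}_A \circ \mathrm{D}'_B$ and the dissimilar derivative by structural induction on $E$.

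For the support conditions, Condition~\textbf{(3)} is immediate since $\mathrm{h}_A(\{1\})=1$ and $\mathrm{h}_A(\{0\})=0$. For Condition~\textbf{(1)}, the case $\mathrm{f}=\vee$ with $k=2$ reduces verbatim to part~\textbf{(a)} of the proof of Proposition~\ref{prop supp anti est supp}, since $\vee_{\mathbb{E}''}$ is still defined as set union; for the ``otherwise'' branch, $\mathrm{h}_A(\mathrm{f}_{\mathbb{E}''}(\mathcal{E}_1,\ldots,\mathcal{E}_k))=\mathrm{h}_A(\{\mathrm{f}_e(\mathrm{h}_A(\mathcal{E}_1),\ldots,\mathrm{h}_A(\mathcal{E}_k))\})=\mathrm{f}_e(\mathrm{h}_A(\mathcal{E}_1),\ldots,\mathrm{h}_A(\mathcal{E}_k))$, and the equivalence is in fact an equality. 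Condition~\textbf{(2)} is the only genuinely new verification, and a direct computation gives $\mathrm{h}_A(\mathcal{E}\cdot_{\mathbb{E}''}F)=\mathrm{h}_A(\{\mathrm{h}_A(\mathcal{E})\cdot F\})=\mathrm{h}_A(\mathcal{E})\cdot F$, from which the language equality with $\mathrm{h}_A(\mathcal{E})\cdot L(F)$ follows trivially.

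For the identity $\mathrm{h}_A(\mathrm{D}'_B(a,E))=\frac{d'}{d'_a}(E)$, I would proceed by structural induction on $E$. The base cases ($E\in\Sigma\cup\{0,1\}$) match directly since $\mathrm{D}'_B(a,a)=\{1\}$, $\mathrm{D}'_B(a,x)=\{0\}$ for $x\in\Sigma\setminus\{a\}\cup\{0,1\}$, and $\mathrm{h}_A$ on a singleton is the identity. For $E=E_1^*$ and for $E=E_1\cdot E_2$ with $\varepsilon\notin L(E_1)$, the structure $\mathrm{D}'_B(a,E)$ is a singleton whose element unfolds via the induction hypothesis to the expected product. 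For $E=E_1\cdot E_2$ with $\varepsilon\in L(E_1)$ and for $E=E_1+E_2$, the structure is a set union, so $\mathrm{h}_A$ yields a sum of the two derivatives that matches the $+_{ACI}$ clause in the definition of $\frac{d'}{d'_a}$. The generic $\mathrm{f}_e$ case passes through unchanged via the ``otherwise'' branch of $\mathrm{f}_{\mathbb{E}''}$.

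The main obstacle is making rigorous the informal remark that $\mathrm{h}_A(\mathcal{E})=\sum_{E\in\mathcal{E}}E$ is only meaningful modulo ACI equivalence: because $\mathcal{E}$ is an unordered set and $+$ is not ACI in general, $\mathrm{h}_A$ really takes values in $\mathrm{Exp}(\Sigma)$ quotiented by the congruence generated by ACI laws on $+$. This convention is exactly the one used by Brzozowski when defining dissimilar derivatives, so the induction step for the $\varepsilon\in L(E_1)$ case and the $\vee$ case becomes a tautology once both sides are read as ACI-classes: set union on the structure side corresponds precisely to $+_{ACI}$ on the expression side. I would state this identification explicitly at the start of the induction, so that the claimed equality $\mathrm{h}_A(\mathrm{D}'_B(a,E))=\frac{d'}{d'_a}(E)$ is understood as an equality of ACI-classes of regular expressions, which is all that the dissimilar derivative requires.
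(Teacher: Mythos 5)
Your proposal is correct and follows essentially the same route as the paper: conditions \textbf{(1)} and \textbf{(3)} are inherited from the verification done for $\mathbb{S}_A$, condition \textbf{(2)} is the one new computation (and your derivation of $\mathrm{h}_A(\mathcal{E}\cdot_{\mathbb{E}''}F)=\mathrm{h}_A(\mathcal{E})\cdot F$ matches the paper's), and the identity with the dissimilar derivative is obtained by structural induction resting on exactly the two facts the paper highlights, namely that $\cup$ is an ACI law and that $\cdot_{\mathbb{E}''}$ returns a singleton. Your explicit remark that $\mathrm{h}_A$ must be read modulo the ACI congruence on $+$ is a welcome sharpening of the paper's terse ``it can be shown by induction,'' but it does not change the argument.
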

    \begin{proof}
    According to Definition~\ref{def support anti}, Definition~\ref{def support brzo aci} and Proposition~\ref{prop supp anti est supp}, the conditions \textbf{(1)} and \textbf{(3)} of Definition~\ref{def support} are satisfied by $\mathbb{S}'_B$.    
    
     The condition \textbf{(2)} of Definition~\ref{def support} is satisfied since:
     
     \centerline{
       \begin{tabular}{l@{\ }l}
         $L(\mathrm{h}_A(\mathcal{E}_1 \cdot_{\mathbb{E}''} E))$ & $= L(\mathrm{h}_A(\{(\sum_{F\in\mathcal{E}_1} F)\cdot E\}))$\\
         & $= L((\sum_{F\in\mathcal{E}_1} F)\cdot E)$\\
         & $= L(\sum_{F\in\mathcal{E}_1} F)\cdot L(E)$\\
         & $= L(\mathrm{h}_A(\mathcal{E}_1))\cdot L(E)$\\
       \end{tabular}
       } 
       
       Consequently, $\mathbb{S}'_B$ is a support. 
       
       Moreover, since the operator $\cup$ is an ACI law and since the catenation product $\cdot_{\mathbb{E}''}$ returns a singleton, it can be shown by induction that for any regular expression $E$ and for any symbol $a$, it holds that $\mathrm{h}_A(\mathrm{D}'_{B}(a,E))$ is the dissimilar derivative of $E$ w.r.t. $a$.
    \end{proof}
  
\section{From Derivation \emph{via} a Support to Automata}\label{sec from der to aut}

  Computing the set of derivatives of a regular expression $E$ w.r.t. a derivation $\mathrm{D}$ is similar as computing the transition function $\delta$ of an automaton, where $\delta(E,w)=\mathrm{h}(\mathrm{D}(w,E))$. As far as alternating automata are concerned, the resulting expression $\mathrm{h}(\mathrm{D}(w,E))$ needs to be transformed into a boolean formula. This computation is performed through a \emph{base function} defined as follows.

  \begin{definition}\label{def base}
    Let $\Sigma$ be an alphabet. A \emph{base function} $\mathrm{B}$ is a mapping from $\mathrm{Exp}(\Sigma)$ to $\mathrm{BoolForm}(\mathrm{Exp}(\Sigma))$ such that for any expression $E$ and for any word $w$ in $\Sigma^{*}$:
    
    \centerline{$w\in L(E)$ $\Leftrightarrow$ $\mathrm{eval}_{r_w}(B(E))=1$.}
  \end{definition}
  
  \begin{definition}
    Let $\mathrm{B}$ be a base function and $\mathrm{D}$ be the derivation \emph{via} the support $\mathbb{S}=(\Sigma,\mathbb{E},\mathrm{h},\mathcal{O},1_\mathbb{E},0_\mathbb{E})$. Let $E$ be an expression in $\mathrm{Exp}(\Sigma)$. Let $A=(\Sigma,Q,I,F,\delta)$ be the automaton defined by:
    
    \begin{itemize}
      \item $Q=\mathrm{Exp}(\Sigma)$,
      \item $I=E$,
      \item $\forall q\in Q$, 
        $F(q)=
          \left\{
            \begin{array}{l@{\ }l}
              1 & \text{ if }\varepsilon\in L(q),\\
              0 & \text{ otherwise,}
            \end{array}
          \right.
        $
      \item $\forall a\in\Sigma, \forall q\in Q, \delta(q,a)=\mathrm{B}(\mathrm{h}(\mathrm{D}(a,q)))$.
    \end{itemize}
    
    The accessible part of $A$ is said to be the $(\mathrm{D},\mathrm{B})$-\emph{alternating automaton of} $E$. Notice that there may exist an infinite number of states.
  \end{definition}
  
  \begin{theorem}~\label{thm aa lang}
     The $(\mathrm{D},\mathrm{B})$-alternating automaton of a regular expression $E$ recognizes $L(E)$.
  \end{theorem}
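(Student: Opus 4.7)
The plan is to show that for any word $w\in\Sigma^{*}$, it holds that
\[
  \mathrm{eval}_{F}(\delta(E,w)) = r_{w}(L(E)),
\]
which immediately gives $L(A)=L(E)$. Since the accessible part of an alternating automaton recognizes the same language as the automaton itself, it suffices to prove this for the non-restricted $A$ defined before the accessibility reduction.

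The natural induction variable is the length of $w$, but because $\delta$ pushes through boolean formulas via $\delta(\mathrm{f}_{\mathbb{B}}(\phi_{1},\ldots,\phi_{k}),a)=\mathrm{f}_{\mathbb{B}}(\delta(\phi_{1},a),\ldots,\delta(\phi_{k},a))$ and $\delta(\phi,aw)=\delta(\delta(\phi,a),w)$, the cleanest statement to prove by induction is the more general claim: for every boolean formula $\phi$ over $Q=\mathrm{Exp}(\Sigma)$ and every word $w$,
\[
  \mathrm{eval}_{F}(\delta(\phi,w)) \;=\; \mathrm{eval}_{v_{w}}(\phi),
\]
where $v_{w}:Q\to\mathbb{B}$ is defined by $v_{w}(q)=r_{w}(L(q))$. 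The base case $w=\varepsilon$ is immediate, because $\delta(\phi,\varepsilon)=\phi$ and $v_{\varepsilon}$ coincides with $F$ by the definition of $F$. For $w=aw'$, I would first reduce to the case where $\phi$ is an atom, using the structural compatibility of $\delta$ and $\mathrm{eval}$ with $\mathrm{f}_{\mathbb{B}}$; then apply the inductive hypothesis with the shorter word $w'$ and the formula $\delta(\phi,a)$.

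The heart of the argument is then the single-step/atom case: for $q\in Q$, show $\mathrm{eval}_{v_{w'}}(\delta(q,a)) = v_{aw'}(q)$. By the definition of $\delta$, the left-hand side equals $\mathrm{eval}_{v_{w'}}(\mathrm{B}(\mathrm{h}(\mathrm{D}(a,q))))$. Since on $Q=\mathrm{Exp}(\Sigma)$ the function $v_{w'}$ is exactly the function $r_{w'}$ of Definition~\ref{def base}, the defining property of the base function $\mathrm{B}$ applied to the expression $\mathrm{h}(\mathrm{D}(a,q))$ and the word $w'$ gives $\mathrm{eval}_{v_{w'}}(\mathrm{B}(\mathrm{h}(\mathrm{D}(a,q))))=1$ iff $w'\in L(\mathrm{h}(\mathrm{D}(a,q)))$. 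By Proposition~\ref{preservation du langage}, $L(\mathrm{h}(\mathrm{D}(a,q)))=a^{-1}(L(q))$, and $w'\in a^{-1}(L(q))$ iff $aw'\in L(q)$, i.e.\ iff $v_{aw'}(q)=1$. This closes the induction.

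Finally, specializing the general claim to the atomic formula $\phi=I=E$ yields $\mathrm{eval}_{F}(\delta(E,w))=v_{w}(E)=r_{w}(L(E))$, so $w\in L(A)$ iff $w\in L(E)$. The only step that requires any care is lining up the two notational uses of $r_{w}$ (as a map on languages and as a valuation on expressions) so that the base function property feeds cleanly into the induction; everything else is a routine propagation through the boolean-formula and word-decomposition recursions.
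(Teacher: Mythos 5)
Your proof is correct, and its skeleton matches the paper's: an induction on the length of $w$, generalized from the single initial state to arbitrary boolean formulas over $Q$ so that the compatibility of $\delta$ and $\mathrm{eval}$ with the boolean operators can be exploited, with Proposition~\ref{preservation du langage} and the defining property of the base function supplying the atomic case. The difference is the induction invariant. The paper proves the syntactic statement $\mathrm{eval}_F(\delta(\phi,w))=\mathrm{eval}_{r_\varepsilon}(\mathrm{f}_{\mathbb{B}}(\mathrm{B}(\mathrm{h}(\mathrm{D}(w,q_1))),\ldots,\mathrm{B}(\mathrm{h}(\mathrm{D}(w,q_k)))))$: it carries the full-word derivatives $\mathrm{D}(w,q_j)$ through the induction and invokes Definition~\ref{def base} and Proposition~\ref{preservation du langage} only once, at the very end and only at the empty word. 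You instead prove the purely semantic statement $\mathrm{eval}_F(\delta(\phi,w))=\mathrm{eval}_{v_w}(\phi)$ with $v_w(q)=r_w(L(q))$, which uses Proposition~\ref{preservation du langage} only for single letters but calls on the base-function property at every step and for arbitrary residual words $w'$ --- a use that Definition~\ref{def base} does license, since it is quantified over all of $\Sigma^*$. Your invariant handles $w=\varepsilon$ uniformly as the base case (the paper's recurrence starts at one-letter words and leaves the empty word implicit), whereas the paper's version keeps the derivatives $\mathrm{D}(w,q_j)$ visible, which is the shape its later finiteness argument (Theorem~\ref{thm afa}) reuses. The only step you should spell out in a full write-up is the substitution lemma behind ``reduce to the case where $\phi$ is an atom,'' namely that $\mathrm{eval}_{v_{w'}}(\delta(\phi,a))$ equals the evaluation of $\phi$ under the valuation sending $q$ to $\mathrm{eval}_{v_{w'}}(\delta(q,a))$; this is a routine structural induction on $\phi$ and closes the argument.
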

  \begin{proof}
  Let $A=(\Sigma,Q,I,F,\delta)$ be the $(\mathrm{D},\mathrm{B})$-alternating automaton of $E$. 
    Let $\mathrm{D}$ be the derivation via the support $\mathbb{S}=(\Sigma,\mathbb{E},\mathrm{h},\mathcal{O},1_{\mathbb{E}},0_{\mathbb{E}})$. Let $w$ be a word in $\Sigma^{+}$. 
       Let us show by recurrence over the length of $w$ that for any boolean formula $\phi=\mathrm{f}_{\mathbb{B}}(q_1,\ldots,q_k)$ in $\mathrm{FormBool}(Q)$, the following \textbf{(P)} proposition holds:
      
        \centerline{
          $\mathrm{eval}_{F}(\delta(\phi,w))=\mathrm{eval}_{r_{\varepsilon}}(\mathrm{f}_{\mathbb{B}}(\mathrm{B}(\mathrm{h}(\mathrm{D}(w,q_1))),\ldots,\mathrm{B}(\mathrm{h}(\mathrm{D}(w,q_k))))$.
        }
        
        \begin{enumerate}
          \item If $w=a\in\Sigma$, by definition of the transition function $\delta$,          
           $\delta(\phi,a)$ $=\mathrm{f}_{\mathbb{B}}(\delta(q_1,a),$ $\ldots,\delta(q_k,a))$. By construction, for any integer $j$ in $\{1,\ldots,k\}$, $\delta(q_j,a)=\mathrm{B}(\mathrm{h}(\mathrm{D}(a,q_j)))$.
          
           Since for any state $q\in Q$, $F(q)=1\Leftrightarrow \varepsilon\in L(q)$, the following proposition holds:   $\mathrm{eval}_{F}(\delta(\phi,a))=\mathrm{eval}_{r_{\varepsilon}}(\mathrm{f}_{\mathbb{B}}(\mathrm{B}(\mathrm{h}(\mathrm{D}(a,q_1))),\ldots,\mathrm{B}(\mathrm{h}(\mathrm{D}(a,q_k))))$.
          
          \item Let $w=au$ with $a\in\Sigma$ and $u\in\Sigma^+$. Then it holds:
          
              \begin{tabular}{l@{\ }l}
                & $\mathrm{eval}_{F}(\delta(\phi,au))$\\
                & $=\mathrm{eval}_{F}(\delta(\delta(\phi,a),u))$ \hfill\textbf{(Definition of $\delta$)}\\
                & $=\mathrm{eval}_{F}(\delta(\mathrm{f}_{\mathbb{B}}(\delta(q_1,a),\ldots,\delta(q_k,a))),u))$ \hfill\textbf{(Definition of $\delta(\phi,a)$)}\\
                & $=\mathrm{eval}_{F}(\mathrm{f}_{\mathbb{B}}(\delta(\delta(q_1,a),u),\ldots,\delta(\delta(q_k,a),u)))$ \hfill\textbf{(Definition of $\delta$)}\\
                & $=\mathrm{eval}_{F}(\mathrm{f}_{\mathbb{B}}(\delta(\mathrm{B}(\mathrm{h}(\mathrm{D}(a,q_1))),u),\ldots,\delta(\mathrm{B}(\mathrm{h}(\mathrm{D}(a,q_k))),u)))$\\ & \hfill\textbf{(Construction of $\delta$)}\\
                & $=\mathrm{f}(\mathrm{eval}_{F}(\delta(\mathrm{B}(\mathrm{h}(\mathrm{D}(a,q_1))),u)),\ldots,\mathrm{eval}_{F}(\delta(\mathrm{B}(\mathrm{h}(\mathrm{D}(a,q_k))),u)))$ \\ & \hfill\textbf{(Definition of $\mathrm{eval}_F$)}\\
                & $=\mathrm{f}(\mathrm{eval}_{r_{\varepsilon}}(\mathrm{B}(\mathrm{h}(\mathrm{D}(u,\mathrm{h}(\mathrm{D}(a,q_1)))))),$\\
                &\ \ \ \ \ \ \ \ \ \ \ \ \ \ \ \ \ \ \ \ $\ldots,\mathrm{eval}_{r_{\varepsilon}}(\mathrm{B}(\mathrm{h}(\mathrm{D}(u,\mathrm{h}(\mathrm{D}(a,q_k)))))))$\\ &  \hfill\textbf{(Induction hypothesis)}\\
                & $=\mathrm{eval}_{r_{\varepsilon}}(\mathrm{f}_{\mathbb{B}}(\mathrm{B}(\mathrm{h}(\mathrm{D}(u,\mathrm{h}(\mathrm{D}(a,q_1))))),\ldots,\mathrm{B}(\mathrm{h}(\mathrm{D}(u,\mathrm{h}(\mathrm{D}(a,q_k)))))))$\\ &  \hfill\textbf{(Definition of $\mathrm{eval}_F$)}\\
                & $=\mathrm{eval}_{r_{\varepsilon}}(\mathrm{f}_{\mathbb{B}}(\mathrm{B}(\mathrm{h}(\mathrm{D}(au,q_1))),\ldots,\mathrm{B}(\mathrm{h}(\mathrm{D}(au,q_k))))$ \hfill\textbf{(Definition of $\mathrm{D}$)}\\
              \end{tabular}
        \end{enumerate}
        
        Finally, it holds:
        
          \begin{tabular}{l@{\ }l}
            $w\in L(E)$ & $\Leftrightarrow$ $\varepsilon\in L(\mathrm{h}(\mathrm{D}(w,E)))$ \hfill\textbf{(Proposition~\ref{preservation du langage})}\\
            & $\Leftrightarrow$ $\mathrm{eval}_{r_{\varepsilon}}(\mathrm{B}((\mathrm{h}(\mathrm{D}(w,E)))))=1$ \hfill\textbf{(Definition~\ref{def base})}\\
            & $\Leftrightarrow$ $\mathrm{eval}_{F}(\delta(E,w))=1$ \hfill\textbf{(Proposition \textbf{(P)})}\\
            & $\Leftrightarrow$ $w\in L(A)$\ \ \ \ \ \hfill\textbf{(Definition of the language of an AA)}\\
          \end{tabular}

  \end{proof}

  Previous definitions and properties address non necessarily finite automata. We now give sufficient conditions for the finiteness of the automaton.
  The basic idea is that if it is equivalent to derive an expression $E$ or to derive its atoms $\mathrm{Atom}(\mathrm{B}(E))$, then the set of atoms obtained by a finite derivation is finite. 
  
  \begin{definition}
    Let $\mathbb{S}=(\Sigma,\mathbb{E},\mathrm{h},\mathcal{O},1_\mathbb{E},0_\mathbb{E})$ be a support, let $\mathrm{D}$ be the derivation \emph{via} $\mathbb{S}$ and $\mathrm{B}$ be a base function. The couple $(\mathrm{D},\mathrm{B})$ satisfies the \emph{atom-derivability property} if for any expression $E$ in $\mathrm{Exp}(\Sigma)$ and for any symbol $a$ in $\Sigma$:
    
    \centerline{
      $\mathrm{Atom}(\mathrm{B}(\mathrm{h}(\mathrm{D}(a,E))))=\bigcup_{E'\in\mathrm{Atom}(\mathrm{B}(E))} \mathrm{Atom}(\mathrm{B}(\mathrm{h}(\mathrm{D}(a,E'))))$.
    }
  \end{definition}

  \begin{theorem}\label{thm afa}
    Let $A$ be the $(\mathrm{D},\mathrm{B})$-alternating automaton of a regular expression $E$. If $\mathrm{D}$ is finite and if $(\mathrm{D},\mathrm{B})$ satisfies the atom-derivability property, then:
    
    \centerline{
      $A$ is an AFA.
    }
  \end{theorem}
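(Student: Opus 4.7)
The plan is to bound the accessible state set $Q'$ by a finite set derived from $\{\mathrm{D}(w,E)\mid w\in\Sigma^+\}$, which is finite by hypothesis. By definition of the accessible part, $Q' = \mathrm{Atom}(I)\cup\bigcup_{w\in\Sigma^+}\mathrm{Atom}(\delta(I,w)) = \{E\}\cup\bigcup_{w\in\Sigma^+}\mathrm{Atom}(\delta(E,w))$, so it suffices to show the latter union is finite.

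The key technical lemma I would establish first is: for every regular expression $q$ and every $w\in\Sigma^+$,
\[
\mathrm{Atom}(\delta(q,w))=\mathrm{Atom}(\mathrm{B}(\mathrm{h}(\mathrm{D}(w,q)))).
\]
The base case $w=a\in\Sigma$ is immediate from the construction $\delta(q,a)=\mathrm{B}(\mathrm{h}(\mathrm{D}(a,q)))$. For the inductive step with $w=au$, one writes $\delta(q,au)=\delta(\delta(q,a),u)$, notes that $\delta$ commutes with the boolean operators on formulas, so that $\mathrm{Atom}(\delta(\psi,u))=\bigcup_{E'\in\mathrm{Atom}(\psi)}\mathrm{Atom}(\delta(E',u))$ for any formula $\psi$, and applies the induction hypothesis atom-wise with $\psi=\mathrm{B}(\mathrm{h}(\mathrm{D}(a,q)))$. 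This yields $\mathrm{Atom}(\delta(q,au))=\bigcup_{E'\in\mathrm{Atom}(\mathrm{B}(\mathrm{h}(\mathrm{D}(a,q))))}\mathrm{Atom}(\mathrm{B}(\mathrm{h}(\mathrm{D}(u,E'))))$. The atom-derivability property, applied to $\mathrm{h}(\mathrm{D}(a,q))$ and to each successive prefix of $u$ (combined with the recursive definition $\mathrm{D}(au,q)=\mathrm{D}(u,\mathrm{h}(\mathrm{D}(a,q)))$), collapses this union to $\mathrm{Atom}(\mathrm{B}(\mathrm{h}(\mathrm{D}(au,q))))$.

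Once the lemma is in hand, the finiteness argument is routine: $\mathrm{D}$ being finite gives that $\{\mathrm{D}(w,E)\mid w\in\Sigma^+\}$ has finitely many elements, hence so does its image $\{\mathrm{h}(\mathrm{D}(w,E))\mid w\in\Sigma^+\}\subseteq\mathrm{Exp}(\Sigma)$ and in turn $\{\mathrm{B}(\mathrm{h}(\mathrm{D}(w,E)))\mid w\in\Sigma^+\}\subseteq\mathrm{BoolForm}(\mathrm{Exp}(\Sigma))$. Each boolean formula has finitely many atoms, so the union $\bigcup_{w\in\Sigma^+}\mathrm{Atom}(\mathrm{B}(\mathrm{h}(\mathrm{D}(w,E))))$ is finite. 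Combined with the lemma, this bounds $Q'$ by a finite set, so $A$ is an AFA.

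The main obstacle will be the careful bookkeeping in the inductive step of the lemma: one has to juggle the extension of $\delta$ from states to boolean formulas, the recursive definition of $\mathrm{D}$ on words, and the atom-derivability equality, making sure that after replacing atoms of $\psi$ by their $\delta(\cdot,u)$ values and taking atoms, what one gets matches exactly the right-hand side of the atom-derivability equation iterated along $u$. A clean way to handle this is to first prove the \emph{single symbol} atom identity $\mathrm{Atom}(\delta(\phi,a))=\bigcup_{q\in\mathrm{Atom}(\phi)}\mathrm{Atom}(\mathrm{B}(\mathrm{h}(\mathrm{D}(a,q))))$ for any formula $\phi$ (straightforward from the definition of $\delta$ on formulas), and then iterate using both this identity and atom-derivability to conclude the general lemma by induction on $|w|$.
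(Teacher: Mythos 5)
Your proposal is correct and follows essentially the same route as the paper: the same key lemma $\mathrm{Atom}(\delta(q,w))=\mathrm{Atom}(\mathrm{B}(\mathrm{h}(\mathrm{D}(w,q))))$ proved by induction on $|w|$ using atom-derivability, followed by the same finiteness argument from the finiteness of $\{\mathrm{D}(w,E)\mid w\in\Sigma^+\}$. The only (cosmetic) difference is that you peel off the first symbol and therefore need the word-iterated form of atom-derivability, whereas the paper peels off the last symbol and instead needs the suffix form $\mathrm{D}(ua,q)=\mathrm{D}(a,\mathrm{h}(\mathrm{D}(u,q)))$ of the recursion; each is an equally small auxiliary induction.
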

  \begin{proof}
    Let $\mathrm{D}$ be the derivation via the support $\mathbb{S}=(\Sigma,\mathbb{E},\mathrm{h},\mathcal{O},1_{\mathbb{E}},0_{\mathbb{E}})$. Let $w$ be a word in $\Sigma^+$.
    
    \begin{enumerate}
    
    \item  Let us show by recurrence over the length of $w$ that for any expression $q$ in $Q$, $\mathrm{Atom}(\mathrm{B}(\mathrm{h}(\mathrm{D}(w,q))))=\mathrm{Atom}(\delta(q,w))$.
  
  \begin{enumerate}
        
      \item If $w=a\in\Sigma$, $\delta(q,a)=\mathrm{B}(\mathrm{h}(\mathrm{D}(a,q)))$. Then, $\mathrm{Atom}(\mathrm{B}(\mathrm{h}(\mathrm{D}(a,q))))$ $=\mathrm{Atom}(\delta(q,a))$.
      
      \item If $w=ua$ with $a\in\Sigma$ and $u\in\Sigma^+$, by the recurrence hypothesis, $\mathrm{Atom}(\delta(q,u))=\mathrm{Atom}(\mathrm{B}(\mathrm{D}(u,q)))$.

         \begin{tabular}{l@{\ }l}
          & $\mathrm{Atom}(\delta(q,ua))$\\
          & $=\mathrm{Atom}(\delta(\delta(q,u),a))$ \hfill\textbf{(Definition of $\delta$)}\\
          & $=\mathrm{Atom}(\delta(\mathrm{f}_\mathbb{B}(q'_1,\ldots,q'_j),a))$ \\ & \hfill\textbf{(Definition of $\delta(q,u)$ with $\mathrm{Atom}(\delta(q,u))=\{q'_1,\ldots,q'_j\}$)}\\
          & $=\mathrm{Atom}(\mathrm{f}_\mathbb{B}(\delta(q'_1,a),\ldots,\delta(q'_j,a))$ \hfill\textbf{(Definition of $\delta$)}\\
          & $=\bigcup_{q'\in\{q'_1,\ldots,q'_j\}} \mathrm{Atom}(\delta(q',a))$ \hfill\textbf{(Definition of $\mathrm{Atom}$)}\\
          & $=\bigcup_{q' \in \mathrm{Atom}(\delta(q,u))} \mathrm{Atom}(\delta(q',a))$ \hfill\textbf{(Definition of $\{q'_1,\ldots,q'_j\}$)}\\
          & $=\bigcup_{q' \in \mathrm{Atom}(\mathrm{B}(\mathrm{h}(\mathrm{D}(u,q))))} \mathrm{Atom}(\mathrm{B}(\mathrm{h}(\mathrm{D}(a,q'))))$\\ &  \hfill\textbf{(Induction hypothesis and construction of $\delta$)}\\
          & $=\mathrm{Atom}(\mathrm{B}(\mathrm{h}(\mathrm{D}(a,\mathrm{h}(\mathrm{D}(u,q))))))$ \hfill\textbf{(Atom-derivability property)}\\
         \end{tabular}

    \end{enumerate}
    \item As a direct consequence of the previous point, since the set $\{\mathrm{D}(w,E)\mid w\in\Sigma^+\}$ is finite, so are the sets $\{\mathrm{B}(\mathrm{h}(\mathrm{D}(w,E)))\mid w\in\Sigma^+\}$ and $\bigcup_{w\in\Sigma^+} \mathrm{Atom}($ $\mathrm{B}(\mathrm{h}(\mathrm{D}(w,q))))$. Finally, the set $Q$, that is equal to $\bigcup_{w\in\Sigma^+} \mathrm{Atom}(\delta(q,w))$, is a finite set.
    \end{enumerate}
    \end{proof}
  
   \begin{example}
    Let $\mathrm{D}_A$ and $\mathrm{D}'_B$ be the derivations of Definition~\ref{def support anti} and Definition~\ref{def support brzo aci}.
    Let $\mathrm{B}_{A}$ be the base inductively defined for any expression by  $\mathrm{B}_{A}(E+F)=\mathrm{B}_{A}(E)\ \vee_{\mathbb{B}}\ \mathrm{B}_{A}(F)$, $\mathrm{B}_{A}(E)=E$ otherwise.
    Let $\mathrm{B}_{B}$ the base defined for any expression by $\mathrm{B}_{B}(E)=E$.
    It can be shown that any couple in $\{\mathrm{D}_A,\mathrm{D}'_B\}\times\{\mathrm{B}_{A},\mathrm{B}_{B}\}$ satisfies the atom-derivability property.
    
    Furthermore, the $(\mathrm{D}_A,\mathrm{B}_A)$-AFA of $E$ can be straightforwardly transformed into the 
    derived
    term NFA of $E$ defined by Antimirov in~\cite{Ant96}; the $(\mathrm{D}_A,\mathrm{B}_B)$-AFA of $E$ can be transformed into the partial derivative DFA of $E$ defined by Antimirov in~\cite{Ant96};  the $(\mathrm{D}'_B,\mathrm{B}_B)$-AFA of $E$ can be transformed into the dissimilar derivative DFA of $E$ defined by Brzozowski in~\cite{Brz64}. Notice that the $(\mathrm{D}'_B,\mathrm{B}_A)$-AFA of $E$ can be transformed into a NFA that is different from the NFA of Antimirov.
    
    Finally, the base $\mathrm{B}_C$ inductively defined by $\mathrm{B}_C(\mathrm{f}_e(E_1,\ldots,E_k)=\mathrm{f}_{\mathbb{B}}(\mathrm{B}_C(E_1),$ $\ldots,\mathrm{B}_C(E_k))$ for any operator $f_e$, $\mathrm{B}_C(E)=E$ otherwise, provides an AFA construction both from $\mathrm{D}_A$ and $\mathrm{D}'_B$.
  \end{example}

  \section{Derivation \emph{via} the Set of Clausal Forms}\label{sec deriv via set clausal form}
    
    In this section, we show that the set of clausal forms over the set of regular expressions and equipped with the right operators is a derivation support and that the associated $\mathrm{D}_{\mathbb{C}}$ derivation is finite. Furthermore we prove that the atom-derivability property is satisfied whenever the $\mathrm{D}_{\mathbb{C}}$ derivation is associated with a base function in the set $\{\mathrm{B}_A, \mathrm{B}_B, \mathrm{B}_C\}$. Finally we illustrate these results by the construction of the $(\mathrm{D}_{\mathbb{C}},\mathrm{B}_C)$-AFA of the expression $E=((ab)^*a)\mathrm{XOR}_e ((abab)^*a)$.
    
    \medskip
    
    Let us first recall some definitions about clausal forms and their operators.  
  A \emph{clausal form} over a set $X$ is an element in $\mathcal{C}(X)=2^{2^{X\cup \overline{X}}}$ where $\overline{X}=\{\overline{x}\mid x\in X\}$. Let $\ocup$ and $\ocap$ be the two mappings from $\mathcal{C}(X)\times \mathcal{C}(X)$ to $\mathcal{C}(X)$ and $\oneg$ be the function from $\mathcal{C}(X)$  to $\mathcal{C}(X)$ defined for any $\mathcal{C}_1,\mathcal{C}_2$ in $\mathcal{C}(X)$ by:
  
  \begin{itemize}
    \item $\mathcal{C}_1\ocup \mathcal{C}_2 =\mathcal{C}_1\cup \mathcal{C}_2$,
    \item $\mathcal{C}_1\ocap \mathcal{C}_2 =\bigcup_{(C_1,C_2)\in \mathcal{C}_1\times\mathcal{C}_2}\{C_1\cup C_2\}$, 
    \item   $ \oneg(\mathcal{C}_1) =\left\{
          \begin{array}{l@{\ }l}
            \{\emptyset\} & \text{ if } \mathcal{C}_1=\emptyset,\\
            \emptyset & \text{ if } \mathcal{C}_1=\{\emptyset\},\\
            \ocap_{C\in\mathcal{C}_1}  \ocup_{c\in C} \{\{\mathrm{n}(c)\} \} & \text{ otherwise,}\
          \end{array}
        \right.$  
  \end{itemize}
  
  where $\mathrm{n}(c) =
          \left\{
            \begin{array}{l@{\ }l}
              c' & \text{ if }c=\overline{c'},\\
              \overline{c} & \text{ otherwise.}
            \end{array}
          \right.$
  
  It can be shown that for any element $x\in X\cup \overline{X}$, for any clause $\mathcal{C}_1,\mathcal{C}_2,\mathcal{C}_3$ in $\mathcal{C}(X)$, the following conditions are satisfied:
  
  \begin{itemize}
    \item $\mathcal{C}_1 \ocap(\mathcal{C}_2\ocup \mathcal{C}_3)=(\mathcal{C}_1 \ocap \mathcal{C}_2)\ocup (\mathcal{C}_1 \ocap \mathcal{C}_3)$,
    \item $\oneg\oneg\{\{x\}\}=\{\{x\}\}$,
    \item $\oneg(\mathcal{C}_1\ocup\mathcal{C}_2)=\oneg(\mathcal{C}_1)\ocap \oneg(\mathcal{C}_2)$,
    \item $\oneg(\mathcal{C}_1\ocap\mathcal{C}_2)=\oneg(\mathcal{C}_1)\ocup \oneg(\mathcal{C}_2)$.
  \end{itemize}
  
  Furthermore, let us notice that there exist clauses $\mathcal{C}_1,\mathcal{C}_2,\mathcal{C}_3$ such that $\oneg\oneg\mathcal{C}_1\neq\mathcal{C}_1$ or $\mathcal{C}_1 \ocup(\mathcal{C}_2\ocap \mathcal{C}_3)=(\mathcal{C}_1 \ocup \mathcal{C}_2)\ocap (\mathcal{C}_1 \ocup \mathcal{C}_3)$.
  
  \medskip
    
    From now on, we will consider the set $\mathbb{C}=\mathcal{C}(\mathrm{Exp}(\Sigma))$ of the clausal forms over the set of regular expressions.    
    We now explain how a clausal form over the set of regular expressions is transformed into a regular expression. Let us consider the function $\mathrm{h}_{\mathbb{C}}$ defined from  $\mathbb{C}$ to $\mathrm{Exp}(\Sigma)$ for any element $\mathcal{C}$ in $\mathbb{C}$ by:
    
    \centerline{
      \begin{tabular}{l@{\ }l}
        $\mathrm{h}_{\mathbb{C}}(\mathcal{C})$ & $= 
          \left\{
          \begin{array}{l@{\ }l}
          0 & \text{ if } \mathcal{C}=\emptyset,\\          
          \sum_{C\in\mathcal{C}}
          \left\{
            \begin{array}{l@{\ }l}
              \neg_e 0 & \text{ if }C=\emptyset,\\
              \bigwedgee_{E\in C} c(E) & \text{ otherwise,}
            \end{array}
          \right. & \text{ otherwise.}
          \end{array}
          \right.$\\
        where $c(E)$ & $=
          \left\{
            \begin{array}{l@{\ }l}
              \neg_e(E') & \text{ if } E=\overline{E'},\\
              E & \text{ otherwise.}\\ 
            \end{array}
          \right.$\\
      \end{tabular}
    }
    
    Let us now give the definition of the support operators. For any $k$-ary boolean function $\mathrm{f}$, let $\mathrm{f}_{\mathbb{C}}$ be the operator from $(\mathbb{C})^k$ to $\mathbb{C}$ associated with $\mathrm{f}$  defined by:
    
    \begin{itemize}
      \item $\mathrm{f}_{\mathbb{C}}(\mathcal{C}_1,\ldots,\mathcal{C}_k)=
        \ocup_{b=(b_1,\ldots,b_k)\in \mathbb{B}^k\mid \mathrm{f}(b)=1}
          \ocap_{1\leq j\leq k} \mathrm{g}(b_j,\mathcal{C}_j)
      $,
      \item where $\mathrm{g}(b_j,\mathcal{C}_j)=\left\{
        \begin{array}{l@{\ }l}
          \mathcal{C}_j & \text{ if }b_j=1,\\
          \oneg \mathcal{C}_j & \text{ otherwise.}\\
        \end{array}
      \right.$
    \end{itemize}
    
    The operator $\cdot_{\mathbb{C}}$ from $\mathbb{C}\times \mathrm{Exp}(\Sigma)$ to $\mathbb{C}$ is defined, for any clause $\mathcal{C}$ in $\mathbb{C}$ and for any expression $F$ in $\mathrm{Exp}(\Sigma)$, by:
    
    \centerline{
      $\mathcal{C}\cdot_{\mathbb{C}} F=\bigcup_{C\in \mathcal{C}} \{\mathrm{h}_{\mathbb{C}}(\{C\})\cdot F\}$.
    }
    
    Finally, we consider the operation  set $\mathcal{O}_{\mathbb{C}}$ defined by 
    
    \centerline{
    $\mathcal{O}_{\mathbb{C}}=
         \{\ocup,\cdot_{\mathbb{C}}\}
        \cup
        \{\mathrm{f}_{\mathbb{C}}\mid \mathrm{f} \text{ is a }k\text{-ary boolean function different from }\vee\}
      .$}
      
     Let us notice that, by definition, the operator $\oneg$ (resp. $\ocap$) is equal to  the operator $\neg_{\mathbb{C}}$ (resp. $\wedge_{\mathbb{C}}$) whereas the operator $\ocup$ is different from $\vee_{\mathbb{C}}$, since:
     
     \centerline{ $\mathcal{C}_1\vee_{\mathbb{C}} \mathcal{C}_2= ((\oneg(\mathcal{C}_1) \ocap \mathcal{C}_2) \ocup (\mathcal{C}_1 \ocap \oneg(\mathcal{C}_2))) \ocup (\mathcal{C}_1 \ocap \mathcal{C}_2)$.}
     
     There exist several expressions (combinations of $\ocup$, $\ocap$ and $\oneg$) for a given $\mathrm{f}_{\mathbb{C}}$ operator. As an example, with the ternary $\vee^3(b_1,b_2,b_3)=b_1\vee b_2 \vee b_3$ is associated an operator $\vee^3_{\mathbb{C}}$ that can be expressed as the combination of two $\ocup$ operators:

    \centerline{
      $\vee^3_{\mathbb{C}}(\mathcal{C}_1,\mathcal{C}_2,\mathcal{C}_3)=(\mathcal{C}_1 \ocup \mathcal{C}_2)\ocup \mathcal{C}_3$.
    }
     
     Reduced expressions can be found using Karnaugh maps for instance.
    
     We now consider the $6$-tuple $\mathbb{S}_{\mathbb{C}}=(\Sigma,\mathbb{C}, \mathcal{O}_{\mathbb{C}},\mathrm{h}_{\mathbb{C}},\{\{1\}\},\emptyset)$ and show that it is a support.
    
    \begin{proposition}\label{prop sc support}
      The $6$-tuple $\mathbb{S}_\mathbb{C}$ is a support.
    \end{proposition}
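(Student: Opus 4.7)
To prove that $\mathbb{S}_\mathbb{C}$ is a support I must verify the three conditions of Definition~\ref{def support}. Conditions~(2) and~(3) are short computations: from the definitions one reads off $\mathrm{h}_\mathbb{C}(\{\{1\}\})=1$ and $\mathrm{h}_\mathbb{C}(\emptyset)=0$ directly, while $\mathcal{C}\cdot_\mathbb{C} F=\bigcup_{C\in\mathcal{C}}\{\mathrm{h}_\mathbb{C}(\{C\})\cdot F\}$ gives $\mathrm{h}_\mathbb{C}(\mathcal{C}\cdot_\mathbb{C} F)=\sum_{C\in\mathcal{C}}\mathrm{h}_\mathbb{C}(\{C\})\cdot F$, which denotes the same language as $\mathrm{h}_\mathbb{C}(\mathcal{C})\cdot F=\bigl(\sum_{C\in\mathcal{C}}\mathrm{h}_\mathbb{C}(\{C\})\bigr)\cdot F$ by the right-distributivity of catenation over $+$. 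The real work lies in condition~(1).

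\textbf{Semantic reading of $\mathrm{h}_\mathbb{C}$.} The first substantive step is to establish a compact description of the language denoted by $\mathrm{h}_\mathbb{C}(\mathcal{C})$:
\[
L(\mathrm{h}_\mathbb{C}(\mathcal{C}))=\bigcup_{C\in\mathcal{C}}\ \bigcap_{x\in C}\widehat L(x),
\]
where $\widehat L(E)=L(E)$ on a positive literal $E$ and $\widehat L(\overline E)=\Sigma^*\setminus L(E)$ on a negated literal, with the usual conventions that the empty union is $\emptyset$ and the empty intersection is $\Sigma^*$. This is checked by inspection of the three branches of $\mathrm{h}_\mathbb{C}$: the case $\mathcal{C}=\emptyset$ yields $0$ and matches the empty union; a clause $C=\emptyset\in\mathcal{C}$ yields the summand $\neg_e 0$, whose language $\Sigma^*$ matches the empty intersection; and the generic branch gives the DNF semantics verbatim via the definition of $c(\cdot)$ together with the semantics of $+$ and $\bigwedgee$.

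\textbf{From DNF operations to $\mathrm{f}_\mathbb{C}$.} Using this reading I prove three compatibility lemmas, each a short set-theoretic calculation: $L(\mathrm{h}_\mathbb{C}(\mathcal{C}_1\ocup\mathcal{C}_2))=L(\mathrm{h}_\mathbb{C}(\mathcal{C}_1))\cup L(\mathrm{h}_\mathbb{C}(\mathcal{C}_2))$ (immediate, since $\ocup$ is literally union of clause sets); $L(\mathrm{h}_\mathbb{C}(\mathcal{C}_1\ocap\mathcal{C}_2))=L(\mathrm{h}_\mathbb{C}(\mathcal{C}_1))\cap L(\mathrm{h}_\mathbb{C}(\mathcal{C}_2))$ (distributivity of intersection over union, encoded exactly by the cross-product definition of $\ocap$); and $L(\mathrm{h}_\mathbb{C}(\oneg\mathcal{C}))=\Sigma^*\setminus L(\mathrm{h}_\mathbb{C}(\mathcal{C}))$ (De Morgan). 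Once these are available, condition~(1) drops out of the defining DNF expansion
\[
\mathrm{f}_\mathbb{C}(\mathcal{C}_1,\ldots,\mathcal{C}_k)=\ocup_{b\in\mathbb{B}^k,\,\mathrm{f}(b)=1}\ocap_{1\le j\le k}\mathrm{g}(b_j,\mathcal{C}_j),
\]
since applying $L\circ\mathrm{h}_\mathbb{C}$ through the three compatibility lemmas yields the canonical DNF of $\mathrm{f}$ applied to the languages $L(\mathrm{h}_\mathbb{C}(\mathcal{C}_j))$, which is $\mathrm{f}_L(L(\mathrm{h}_\mathbb{C}(\mathcal{C}_1)),\ldots,L(\mathrm{h}_\mathbb{C}(\mathcal{C}_k)))=L(\mathrm{f}_e(\mathrm{h}_\mathbb{C}(\mathcal{C}_1),\ldots,\mathrm{h}_\mathbb{C}(\mathcal{C}_k)))$.

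\textbf{Main obstacle.} The only delicate part is the $\oneg$-compatibility lemma, whose three-branch case analysis ($\mathcal{C}=\emptyset$, $\mathcal{C}=\{\emptyset\}$, generic) must be matched against the complement of the semantic reading under the empty-union/empty-intersection conventions. The generic branch requires expanding $\ocap_{C\in\mathcal{C}}\ocup_{c\in C}\{\{\mathrm{n}(c)\}\}$ via the already-proved $\ocup$ and $\ocap$ lemmas and checking that the literal negation $\mathrm{n}$ aligns with set-theoretic complementation, i.e.\ that $\widehat L(\mathrm{n}(c))=\Sigma^*\setminus\widehat L(c)$ for every literal $c$. That bookkeeping is where the care is required; everything else is routine.
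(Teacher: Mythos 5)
Your proposal is correct and follows essentially the same route as the paper's proof: both reduce condition~(1) to the compatibility of $\ocup$, $\ocap$ and $\oneg$ with union, intersection and complement at the language level, then conclude for general $\mathrm{f}_\mathbb{C}$ via its defining DNF expansion, with conditions~(2) and~(3) handled by the same short computations. Your isolation of the explicit DNF semantics of $\mathrm{h}_\mathbb{C}$ as a preliminary lemma is only a mild repackaging of the calculations the paper carries out inline.
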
 
    \begin{proof}
    
      Properties of support are trivially checked for the clauses $\{\{1\}\}$, $\emptyset$ and $\{\emptyset\}$. Let us consider that $\mathcal{C},\mathcal{C}_1,\ldots,\mathcal{C}_k$ are elements in $\mathbb{C}\setminus\{\emptyset,\{\emptyset\}\}$. According to the definitions of $\ocup$ and $\ocap$:
      
      \centerline{
        \begin{tabular}{l@{\ }l} 
          $L(\mathrm{h}_{\mathbb{C}}(\mathcal{C}_1\ocup \mathcal{C}_2))$ & $=L(\mathrm{h}_{\mathbb{C}}(\mathcal{C}_1\cup \mathcal{C}_2))$\\
          & $=L(\sum_{C\in \mathcal{C}_1\cup \mathcal{C}_2} \bigwedgee_{E\in C} \mathrm{c}(E))$\\
          & $=\bigcup_{C\in \mathcal{C}_1\cup \mathcal{C}_2} L( \bigwedgee_{E\in C} \mathrm{c}(E))$\\
          & $=\bigcup_{C\in \mathcal{C}_1} L( \bigwedgee_{E\in C} \mathrm{c}(E)) \cup \bigcup_{C\in \mathcal{C}_2} L( \bigwedgee_{E\in C} \mathrm{c}(E))$\\
          & $= L( \sum_{C\in \mathcal{C}_1} \bigwedgee_{E\in C} \mathrm{c}(E)) \cup  L( \sum_{C\in \mathcal{C}_2}\bigwedgee_{E\in C} \mathrm{c}(E))$\\
          & $= L(\mathrm{h}_{\mathbb{C}}(\mathcal{C}_1)) \cup  L(\mathrm{h}_{\mathbb{C}}(\mathcal{C}_2))$\\
          & $= L(\mathrm{h}_{\mathbb{C}}(\mathcal{C}_1) + \mathrm{h}_{\mathbb{C}}(\mathcal{C}_2))$\\
        \end{tabular}
      }
      
      \centerline{
        \begin{tabular}{l@{\ }l} 
          $L(\mathrm{h}_{\mathbb{C}}(\mathcal{C}_1\ocap \mathcal{C}_2))$ & $=L(\mathrm{h}_{\mathbb{C}}(\bigcup_{(C_1,C_2)\in \mathcal{C}_1\times\mathcal{C}_2}\{C_1\cup C_2\})$\\
          & $=\bigcup_{(C_1,C_2)\in \mathcal{C}_1\times\mathcal{C}_2} L(\mathrm{h}_{\mathbb{C}}(\{C_1\cup C_2\}))$\\
          & $=\bigcup_{(C_1,C_2)\in \mathcal{C}_1\times\mathcal{C}_2} L(\wedge_e{_{C\in C_1\cup C_2 }} c(E))$\\
          & $=\bigcup_{(C_1,C_2)\in \mathcal{C}_1\times\mathcal{C}_2} L(\wedge_e{_{C\in C_1 }} c(E)) \cap L(\wedge_e{_{C\in C_2 }} c(E))$\\
          & $=\bigcup_{(C_1,C_2)\in \mathcal{C}_1\times\mathcal{C}_2} L(\mathrm{h}_{\mathbb{C}}(\{C_1\})) \cap L(\mathrm{h}_{\mathbb{C}}(\{C_2\}))$\\
          & $=\bigcup_{C\in \mathcal{C}_1} L(\mathrm{h}_{\mathbb{C}}(\{C\})) \cap \bigcup_{C\in \mathcal{C}_2} L(\mathrm{h}_{\mathbb{C}}(\{C\}))$\\
          & $= L(\sum_{C\in \mathcal{C}_1} \mathrm{h}_{\mathbb{C}}(\{C\})) \cap  L(\sum_{C\in \mathcal{C}_2} \mathrm{h}_{\mathbb{C}}(\{C\}))$\\
          & $=L(\mathrm{h}_{\mathbb{C}}(\mathcal{C}_1) \wedge_e \mathrm{h}_{\mathbb{C}}(\mathcal{C}_2))$\\
        \end{tabular}
      }     
      
      Moreover,
      
      \centerline{
        \begin{tabular}{l@{\ }l}
          $L(\mathrm{h}_{\mathbb{C}}(\oneg(\{\{c\}\})))$ & $=L(\mathrm{h}_{\mathbb{C}}(\{\{n(c)\}\}))$\\
          & $=
              \left\{
                \begin{array}{l@{\ }l}
                  L(\mathrm{h}_{\mathbb{C}}(\{\{\overline{E}\}\})) & \text{ if } c=E\\
                  L(\mathrm{h}_{\mathbb{C}}(\{\{E\}\})) & \text{ if } c=\overline{E}\\
                \end{array}
              \right.$\\
          & $=
              \left\{
                \begin{array}{l@{\ }l}
                  L(\neg_e E) & \text{ if } c=E\\
                  L(E) & \text{ if } c=\overline{E}\\
                \end{array}
              \right.$\\
          & $=
              \left\{
                \begin{array}{l@{\ }l}
                  \neg_L(E) & \text{ if } c=E\\
                  \neg_L(L(\neg_e E)) & \text{ if } c=\overline{E}\\
                \end{array}
              \right.$\\
          & $=
              \left\{
                \begin{array}{l@{\ }l}
                  \neg_L(\mathrm{h}_{\mathbb{C}}(\{\{c\}\})) & \text{ if } c=E\\
                  \neg_L(L(\mathrm{h}_{\mathbb{C}}(\{\{c\}\}))) & \text{ if } c=\overline{E}\\
                \end{array}
              \right.$\\
          & $=\neg_L(L(\mathrm{h}_{\mathbb{C}}(\{\{c\}\})))$\\
        \end{tabular}
      }

      Consequently :
      
      \centerline{
        \begin{tabular}{l@{\ }l} 
          $L(\mathrm{h}_{\mathbb{C}}(\oneg(\mathcal{C}_1)))$ & $=L(\mathrm{h}_{\mathbb{C}}(\ocap_{C_1\in\mathcal{C}_1}  \ocup_{c\in C} \{\{\mathrm{n}(c)\} \}))$\\
          & $=\bigcap_{C_1\in\mathcal{C}_1} L(\mathrm{h}_{\mathbb{C}}( \ocup_{c\in C} \{\{\mathrm{n}(c)\} \}))$\\
          & $=\bigcap_{C_1\in\mathcal{C}_1} \bigcup_{c\in C} L(\mathrm{h}_{\mathbb{C}}(\{  \{\mathrm{n}(c)\} \}))$\\
          & $=\bigcap_{C_1\in\mathcal{C}_1} \bigcup_{c\in C} L(\mathrm{h}_{\mathbb{C}}(\oneg(\{\{c\}\})))$\\
          & $=\bigcap_{C_1\in\mathcal{C}_1} \bigcup_{c\in C} \neg_L(L(\mathrm{h}_{\mathbb{C}}(\{\{c\}\})))$\\
          & $=\bigcap_{C_1\in\mathcal{C}_1} \neg_L(\bigcap_{c\in C} L(\mathrm{h}_{\mathbb{C}}(\{\{c\}\})))$\\
          & $=\neg_L(\bigcup_{C_1\in\mathcal{C}_1} \bigcap_{c\in C} L(\mathrm{h}_{\mathbb{C}}(\{\{c\}\})))$\\
          & $=\neg_L(L(\mathrm{h}_{\mathbb{C}}(\mathcal{C}_1)))$\\
          & $=L(\neg_e(\mathrm{h}_{\mathbb{C}}(\mathcal{C}_1)))$\\
        \end{tabular}
      }
      
      Hence, according to definition of $\mathrm{f}_{\mathbb{C}}$:
      
      \centerline{
        \begin{tabular}{l@{\ }l} 
          $L(\mathrm{h}_{\mathbb{C}}(\mathrm{f}_{\mathbb{C}}(\mathcal{C}_1,\ldots,\mathcal{C}_k)))$ & $=L(\mathrm{h}_{\mathbb{C}}(\ocup_{b=(b_1,\ldots,b_k)\mid \mathrm{f}(b)=1}
          \ocap_{1\leq j\leq k} \mathrm{g}(b_j,\mathcal{C}_j)))$\\
          & $=\bigcup_{b=(b_1,\ldots,b_k)\mid \mathrm{f}(b)=1} \bigcap_{1\leq j\leq k} L(\mathrm{h}_{\mathbb{C}}(
           \mathrm{g}(b_j,\mathcal{C}_j)))$\\
          & $= \bigcup_{b=(b_1,\ldots,b_k)\mid \mathrm{f}(b)=1} \bigcap_{1\leq j\leq k}
            \left\{
              \begin{array}{l@{\ }l}
                 L(\mathrm{h}_{\mathbb{C}}(\mathcal{C}_j)) & \text{ if } b_j=1 \\
                L(\mathrm{h}_{\mathbb{C}}(\oneg \mathcal{C}_j)) & \text{ if } b_j=0 \\
              \end{array}
            \right.$\\
          & $= \bigcup_{b=(b_1,\ldots,b_k)\mid \mathrm{f}(b)=1} \bigcap_{1\leq j\leq k}
            \left\{
              \begin{array}{l@{\ }l}
                 L(\mathrm{h}_{\mathbb{C}}(\mathcal{C}_j)) & \text{ if } b_j=1 \\
                \neg_L L(\mathrm{h}_{\mathbb{C}}(\mathcal{C}_j)) & \text{ if } b_j=0 \\
              \end{array}
            \right.$\\
          & $=\mathrm{f}_{L}(L(\mathrm{h}_{\mathbb{C}}(\mathcal{C}_1)),\ldots,L(\mathrm{h}_{\mathbb{C}}(\mathcal{C}_k)))$\\
          & $=L(\mathrm{f}_{e}(\mathrm{h}_{\mathbb{C}}(\mathcal{C}_1),\ldots,\mathrm{h}_{\mathbb{C}}(\mathcal{C}_k)))$\\
        \end{tabular}
      }
      
      Furthermore,
      
      \centerline{
        \begin{tabular}{l@{\ }l}
          $L(\mathrm{h}_{\mathbb{C}}(\mathcal{C}\cdot_{\mathbb{C}} F))$ & $=L(\mathrm{h}_{\mathbb{C}}(\bigcup_{C\in \mathcal{C}} \{(\bigwedgee_{E\in C} c(E) )\cdot F\}))$\\
          & $=\bigcup_{C\in \mathcal{C}} L( (\bigwedgee_{E\in C} c(E) )\cdot F)$\\
          & $=\bigcup_{C\in \mathcal{C}} L( \bigwedgee_{E\in C} c(E) ) \cdot L(F)$\\
          & $=(\bigcup_{C\in \mathcal{C}} L( \bigwedgee_{E\in C} c(E) )) \cdot L(F)$\\
          & $= L( \sum_{C\in \mathcal{C}} \bigwedgee_{E\in C} c(E) ) \cdot L(F)$\\
          & $= L(\mathrm{h}_{\mathbb{C}}(\mathcal{C}) ) \cdot L(F)$\\
          & $= L(\mathrm{h}_{\mathbb{C}}(\mathcal{C}) \cdot F)$\\
        \end{tabular}
      }

    \end{proof}

     We now study the properties of the derivation $\mathrm{D}_\mathbb{C}$ associated with the support $\mathbb{S}_\mathbb{C}$.

    \begin{theorem}\label{thm db afa le}
      Let $E$ be an expression in $\mathrm{Exp}(\Sigma)$. Let $(\mathrm{D},\mathrm{B})$ be a couple in $\{\mathrm{D}_{\mathbb{C}}\}\times\{\mathrm{B}_A,\mathrm{B}_B,\mathrm{B}_C\}$. Then:
      
      \centerline{
        The $(\mathrm{D},\mathrm{B})$-automaton of $E$ is an AFA recognizing $L(E)$.
      }
    \end{theorem}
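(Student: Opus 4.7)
The plan is to assemble the statement from the general machinery of Section~\ref{sec from der to aut}. By Theorem~\ref{thm aa lang}, any $(\mathrm{D},\mathrm{B})$-alternating automaton of $E$ already recognizes $L(E)$, so language recognition is immediate. For the AFA property, I will invoke Theorem~\ref{thm afa}, whose two hypotheses are (i) finiteness of the derivation $\mathrm{D}_{\mathbb{C}}$, and (ii) the atom-derivability property for the couple $(\mathrm{D}_{\mathbb{C}},\mathrm{B})$. Thus the work reduces to establishing these two hypotheses once for all three base functions $\mathrm{B}_A,\mathrm{B}_B,\mathrm{B}_C$.

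For the finiteness of $\mathrm{D}_{\mathbb{C}}$, I apply Proposition~\ref{prop finitude}, which requires the conditions \textbf{(H1)} and \textbf{(H2)}. Condition \textbf{(H1)} is immediate: the disjunction used in the derivation formulas is the operator $\ocup$ of $\mathcal{O}_{\mathbb{C}}$, which is plain set union and therefore associative, commutative, and idempotent. Condition \textbf{(H2)} asks that $\mathrm{D}_{\mathbb{C}}(a,\mathrm{h}_{\mathbb{C}}(\mathrm{f}_{\mathbb{C}}(\mathcal{C}_1,\ldots,\mathcal{C}_k)))=\mathrm{f}_{\mathbb{C}}(\mathrm{D}_{\mathbb{C}}(a,\mathrm{h}_{\mathbb{C}}(\mathcal{C}_1)),\ldots,\mathrm{D}_{\mathbb{C}}(a,\mathrm{h}_{\mathbb{C}}(\mathcal{C}_k)))$. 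Here I will expand $\mathrm{f}_{\mathbb{C}}$ into its canonical combination of $\ocup$, $\ocap$ and $\oneg$, and then show by induction on this representation that derivation commutes with each of these three clausal operators. The commutation with $\ocup$ follows from the direct definition of derivation over a sum; the commutation with $\ocap$ and $\oneg$ follows from the corresponding derivation rules applied to the expressions $\mathrm{h}_{\mathbb{C}}(\cdot)$ they produce. This step is the main technical obstacle and uses in an essential way that after normalization the top-level structure of $\mathrm{h}_{\mathbb{C}}(\mathrm{f}_{\mathbb{C}}(\mathcal{C}_1,\ldots,\mathcal{C}_k))$ mirrors the layered DNF built by the definition of $\mathrm{f}_{\mathbb{C}}$.

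For atom-derivability, I treat the three base functions separately. For $\mathrm{B}_B$, the identity base, $\mathrm{Atom}(\mathrm{B}_B(E))=\{E\}$ for every $E$, so both sides of the atom-derivability equation reduce to $\{\mathrm{h}_{\mathbb{C}}(\mathrm{D}_{\mathbb{C}}(a,E))\}$ and the property is trivial. For $\mathrm{B}_A$, which recursively splits only top-level $+$-sums, I proceed by induction on $E$: the key observation is that by \textbf{(H2)} together with the definition of $\mathrm{D}_{\mathbb{C}}$, the expression $\mathrm{h}_{\mathbb{C}}(\mathrm{D}_{\mathbb{C}}(a,E))$ is itself a $+$-sum whose summands are exactly the images $\mathrm{h}_{\mathbb{C}}(\mathrm{D}_{\mathbb{C}}(a,E'))$ for $E'$ ranging over the $\mathrm{B}_A$-atoms of $E$; splitting by $\mathrm{Atom}\circ\mathrm{B}_A$ on both sides then gives the required equality. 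For $\mathrm{B}_C$, which distributes through every boolean operator symbol, the same argument goes through but the induction unfolds all boolean layers of $E$, using the fact that $\mathrm{h}_{\mathbb{C}}$ of a clausal-form operator application produces precisely the expected surface-level $\mathrm{f}_e$ structure so that $\mathrm{B}_C$ peels it off correctly.

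Once finiteness and atom-derivability are secured, Theorem~\ref{thm afa} yields that the $(\mathrm{D}_{\mathbb{C}},\mathrm{B})$-automaton of $E$ is an AFA, and Theorem~\ref{thm aa lang} gives that it recognizes $L(E)$, which concludes the proof. The principal difficulty is clearly \textbf{(H2)}, because it requires bridging the purely set-theoretic definition of $\mathrm{f}_{\mathbb{C}}$ with the expression-level derivation rules through the interpretation map $\mathrm{h}_{\mathbb{C}}$; once this commutation is established, the remaining steps are routine inductions fed into the general framework.
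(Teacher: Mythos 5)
Your proposal follows essentially the same route as the paper: reduce to Theorem~\ref{thm aa lang} for language recognition and Theorem~\ref{thm afa} for finiteness of the state set, establish \textbf{(H1)} from $\ocup=\cup$ and \textbf{(H2)} by showing that derivation commutes with each of $\ocup$, $\ocap$, $\oneg$ and hence with any $\mathrm{f}_{\mathbb{C}}$ expressed as their composition, and then verify atom-derivability case by case for $\mathrm{B}_B$ (trivially), $\mathrm{B}_A$ and $\mathrm{B}_C$ (by induction, again via the three clausal operators). This matches the paper's proof in structure and in the identification of \textbf{(H2)} as the main technical step, so the approach is sound.
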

    \begin{proof}
    
      \textbf{(I)} Let us show that the derivation $\mathrm{D}_{\mathbb{C}}$ satisfies the sufficient conditions for finiteness of Proposition~\ref{prop finitude}. \textbf{(a)} Since $\ocup=\cup$, the function $\ocup$ is associative, commutative and idempotent (\textbf{H1}). \textbf{(b)} According to the definition of the operators $\ocup$, $\ocap$ and $\oneg$:
      
        \centerline{
          \begin{tabular}{l@{\ }l}
            $\mathrm{D}_{\mathbb{C}}(a,\mathrm{h}_{\mathbb{C}}(\mathcal{C}_1\ocup \mathcal{C}_2))$ & $=\mathrm{D}_{\mathbb{C}}(a,\mathrm{h}_{\mathbb{C}}(\mathcal{C}_1\cup \mathcal{C}_2))$ \\
            & $=\mathrm{D}_{\mathbb{C}}(a,\sum_{C\in \mathcal{C}_1\cup \mathcal{C}_2}
          \left\{
            \begin{array}{l@{\ }l}
              \neg_e 0 & \text{ if }C=\emptyset,\\
              \bigwedgee_{E\in C} c(E) & \text{ otherwise,}
            \end{array}
          \right.)$ \\
          & $=\bigcup_{C\in \mathcal{C}_1\cup \mathcal{C}_2} 
          \left\{
            \begin{array}{l@{\ }l}
              \mathrm{D}_{\mathbb{C}}(a,\neg_e 0) & \text{ if }C=\emptyset,\\
              \mathrm{D}_{\mathbb{C}}(a,\bigwedgee_{E\in C} c(E)) & \text{ otherwise,}
            \end{array}
          \right.$ \\
          & $=\bigcup_{C\in \mathcal{C}_1} 
          \left\{
            \begin{array}{l@{\ }l}
              \mathrm{D}_{\mathbb{C}}(a,\neg_e 0) & \text{ if }C=\emptyset,\\
              \mathrm{D}_{\mathbb{C}}(a,\bigwedgee_{E\in C} c(E)) & \text{ otherwise,}
            \end{array}
          \right.$ \\
          & $\ \ \ \ \cup \bigcup_{C\in \mathcal{C}_2} 
          \left\{
            \begin{array}{l@{\ }l}
              \mathrm{D}_{\mathbb{C}}(a,\neg_e 0) & \text{ if }C=\emptyset,\\
              \mathrm{D}_{\mathbb{C}}(a,\bigwedgee_{E\in C} c(E)) & \text{ otherwise,}
            \end{array}
          \right.$\\
          & $=\mathrm{D}_{\mathbb{C}}(a, \sum_{C\in \mathcal{C}_1} 
          \left\{
            \begin{array}{l@{\ }l}
              \neg_e 0 & \text{ if }C=\emptyset,\\
              \bigwedgee_{E\in C} c(E) & \text{ otherwise,}
            \end{array}
          \right.$ \\
          & $\ \ \ \ \cup \mathrm{D}_{\mathbb{C}}(a, \sum_{C\in \mathcal{C}_2} 
          \left\{
            \begin{array}{l@{\ }l}
              \neg_e 0 & \text{ if }C=\emptyset,\\
              \bigwedgee_{E\in C} c(E) & \text{ otherwise,}
            \end{array}
          \right.$\\
            & $=\mathrm{D}_\mathbb{C}(a,\mathrm{h}_{\mathbb{C}}(\mathcal{C}_1))\ocup \mathrm{D}_{\mathbb{C}}(a,\mathrm{h}_{\mathbb{C}}(\mathcal{C}_2))$.\\
          \end{tabular}
        }
      
        \centerline{
          \begin{tabular}{l@{\ }l}
            $\mathrm{D}_\mathbb{C}(a,\mathrm{h}_{\mathbb{C}}(\mathcal{C}_1\ocap \mathcal{C}_2))$ & $=\mathrm{D}_\mathbb{C}(a,\mathrm{h}_{\mathbb{C}}(\bigcup_{(C_1,C_2)\in \mathcal{C}_1\times\mathcal{C}_2}\{C_1\cup C_2\})$\\
            & $=\bigcup_{(C_1,C_2)\in \mathcal{C}_1\times\mathcal{C}_2} \mathrm{D}_\mathbb{C}(a,\mathrm{h}_{\mathbb{C}}(\{C_1\cup C_2\})$\\
            & $=\bigcup_{(C_1,C_2)\in \mathcal{C}_1\times\mathcal{C}_2} \mathrm{D}_\mathbb{C}(a,\mathrm{h}_{\mathbb{C}}(\{C_1\}\cup \{C_2\})$\\
            & $=\bigcup_{(C_1,C_2)\in \mathcal{C}_1\times\mathcal{C}_2} \mathrm{D}_\mathbb{C}(a,\mathrm{h}_{\mathbb{C}}(\{C_1\})\cup \mathrm{D}_\mathbb{C}(a,\mathrm{h}_{\mathbb{C}}(\{C_2\})$\\ 
            & $=\mathrm{D}_\mathbb{C}(a,\mathrm{h}_{\mathbb{C}}(\mathcal{C}_1))\ocap \mathrm{D}_\mathbb{C}(a,\mathrm{h}_{\mathbb{C}}(\mathcal{C}_2))$.\\
          \end{tabular}
        }
        
        \centerline{
          \begin{tabular}{l@{\ }l}
            $\mathrm{D}_\mathbb{C}(a,\mathrm{h}_{\mathbb{C}}(\oneg(\mathcal{C}_1)))$ & $=\mathrm{D}_\mathbb{C}(a,\mathrm{h}_{\mathbb{C}}(\ocap_{C_1\in\mathcal{C}_1}  \ocup_{c\in C} \{\{\mathrm{n}(c)\} \}))$\\
            & $=\ocap_{C_1\in\mathcal{C}_1} \mathrm{D}_\mathbb{C}(a,\mathrm{h}_{\mathbb{C}}(\{ \ocup_{c\in C} \{\mathrm{n}(c)\} \}))$\\
            & $=\ocap_{C_1\in\mathcal{C}_1} \ocup_{c\in C} \mathrm{D}_\mathbb{C}(a,\mathrm{h}_{\mathbb{C}}(\{  \{\mathrm{n}(c)\} \}))$\\
            & $=\ocap_{C_1\in\mathcal{C}_1} \ocup_{c\in C} \mathrm{D}_\mathbb{C}(a,\mathrm{h}_{\mathbb{C}}(
            \left\{
            \begin{array}{l@{\ }l}
              c' & \text{ if }c=\overline{c'},\\
              \overline{c} & \text{ otherwise.}\\
            \end{array}\right.
          ))
            $\\
            & $=\ocap_{C_1\in\mathcal{C}_1} \ocup_{c\in C} 
            \left\{
            \begin{array}{l@{\ }l}
              \mathrm{D}_\mathbb{C}(a,\mathrm{h}_{\mathbb{C}}(c')) & \text{ if }c=\overline{c'},\\
              \mathrm{D}_\mathbb{C}(a,\mathrm{h}_{\mathbb{C}}(\overline{c})) & \text{ otherwise.}\\
            \end{array}\right.
            $\\            
            & $=\ocap_{C_1\in\mathcal{C}_1} \ocup_{c\in C} 
            \left\{
            \begin{array}{l@{\ }l}
              \mathrm{D}_\mathbb{C}(a,c') & \text{ if }c=\overline{c'},\\
              \mathrm{D}_\mathbb{C}(a,\neg_e c)) & \text{ otherwise.}\\
            \end{array}\right.
            $\\           
            & $=\ocap_{C_1\in\mathcal{C}_1} \ocup_{c\in C} 
            \left\{
            \begin{array}{l@{\ }l}
              \oneg \oneg \mathrm{D}_\mathbb{C}(a,c') & \text{ if }c=\overline{c'},\\
              \oneg \mathrm{D}_\mathbb{C}(a,c)) & \text{ otherwise.}\\
            \end{array}\right.
            $\\          
            & $=\ocap_{C_1\in\mathcal{C}_1} \ocup_{c\in C} 
            \left\{
            \begin{array}{l@{\ }l}
              \oneg \mathrm{D}_\mathbb{C}(a,\neg_e c') & \text{ if }c=\overline{c'},\\
              \oneg \mathrm{D}_\mathbb{C}(a,c)) & \text{ otherwise.}\\
            \end{array}\right.
            $\\
            & $=\ocap_{C_1\in\mathcal{C}_1} \ocup_{c\in C} \oneg \mathrm{D}_\mathbb{C}(a,\mathrm{h}_{\mathbb{C}}(\{  \{c\} \}))$\\
            & $=\oneg \ocup_{C_1\in\mathcal{C}_1} \ocap_{c\in C}  \mathrm{D}_\mathbb{C}(a,\mathrm{h}_{\mathbb{C}}(\{  \{c\} \}))$\\
            & $=\oneg  \mathrm{D}_\mathbb{C}(a,\sum_{C_1\in\mathcal{C}_1} \wedge_e{_{c\in C}}\mathrm{h}_{\mathbb{C}}(\{  \{c\} \}))$\\
            & $=\oneg(\mathrm{D}_\mathbb{C}(a,\mathrm{h}_{\mathbb{C}}(\mathcal{C}_1)))$.\\
          \end{tabular}
        }
        
        Finally, since any $\mathrm{f}_{\mathbb{B}}$ is defined as combination of $\ocup$, $\ocap$ and $\oneg$ operators, hypothesis \textbf{H2} holds. According to Proposition~\ref{prop finitude}, $\mathrm{D}_\mathbb{C}$ is finite.
      
      \textbf{(II)} Let us show that any couple in $\{\mathrm{D}_\mathbb{C}\}\times\{\mathrm{B}_A,\mathrm{B}_B,\mathrm{B}_C\}$ satisfies the atom-derivability property. \textbf{(a)} By definition of $\mathrm{B}_B$, the atom-derivability property is satisfied by $(\mathrm{D}_\mathbb{C},\mathrm{B}_B)$. \textbf{(b)} By induction over the structure of $E$. Let $a$ be a symbol in $\Sigma$. \textbf{(i)} If $E\in \Sigma \cup\{1,0\}$ or if $E=F\cdot G$ or if $E=F^*$, since $\mathrm{B}_A(E)=\mathrm{B}_C(E)=E$, $\mathrm{Atom}(\mathrm{B}_A(E))=\mathrm{Atom}(\mathrm{B}_C(E))=\{E\}$. \textbf{(ii)} Let $\mathrm{f}$ be a $k$-ary boolean function.
       Let us first prove that the equation of the atom-derivanility property is satisfied for the operators $\ocup$, $\ocap$ and $\oneg$. If $\mathcal{C}_1$ or $\mathcal{C}_2$ equal $\emptyset$ or $\{\emptyset\}$, equation is trivially satisfied. Let $\mathcal{C}_1$ and $\mathcal{C}_2$ be two clauses different from $\emptyset$ and $\{\emptyset\}$.
       
      \centerline{
        \begin{tabular}{l@{\ }l}
        $\mathrm{Atom}( \mathrm{B}_{C} (\mathrm{h}_\mathbb{C} (\mathcal{C}_1 \cup \mathcal{C}_2)))$ & $=\mathrm{Atom}( \mathrm{B}_{C} (\sum_{C\in \mathcal{C}_1 \cup \mathcal{C}_2} \wedge_e{_{E\in C}} c(E) ))$\\
        & $=\mathrm{Atom}( \vee_\mathbb{B}{_{C\in \mathcal{C}_1 \cup \mathcal{C}_2}} \mathrm{B}_{C} ( \wedge_e{_{E\in C}} c(E) ))$\\
        & $=\bigcup{_{C\in \mathcal{C}_1 \cup \mathcal{C}_2}} \mathrm{Atom}(  \mathrm{B}_{C} ( \wedge_e{_{E\in C}} c(E) ))$\\
        & $=\bigcup{_{C\in \mathcal{C}_1}} \mathrm{Atom}(  \mathrm{B}_{C} ( \wedge_e{_{E\in C}} c(E) ))$\\
        & $\ \ \ \cup \bigcup{_{C\in \mathcal{C}_2}} \mathrm{Atom}(  \mathrm{B}_{C} ( \wedge_e{_{E\in C}} c(E) ))$\\
        & $ =\mathrm{Atom}( \vee_\mathbb{B}{_{C\in \mathcal{C}_1}} \mathrm{B}_{C} ( \wedge_e{_{E\in C}} c(E) ))$\\
        & $\ \ \ \cup  \mathrm{Atom}(\vee_\mathbb{B}{_{C\in \mathcal{C}_2}}  \mathrm{B}_{C} ( \wedge_e{_{E\in C}} c(E) ))$\\
        & $= \mathrm{Atom}(  \mathrm{B}_{C} (\sum{_{C\in \mathcal{C}_1}} \wedge_e{_{E\in C}} c(E) ))$\\
        & $\ \ \ \cup  \mathrm{Atom}(  \mathrm{B}_{C} ( \sum{_{C\in \mathcal{C}_2}} \wedge_e{_{E\in C}} c(E) ))$\\
        & $=\mathrm{Atom}( \mathrm{B}_{C} (\mathrm{h}_\mathbb{C} (\mathcal{C}_1))) \cup \mathrm{Atom}( \mathrm{B}_{C} (\mathrm{h}_\mathbb{C} (\mathcal{C}_2)))$\\
        \end{tabular}
      }
      
      \centerline{
        \begin{tabular}{l@{\ }l}
        $\mathrm{Atom}( \mathrm{B}_{C} (\mathrm{h}_\mathbb{C} (\mathcal{C}_1 \ocap \mathcal{C}_2)))$ & $=\mathrm{Atom}( \mathrm{B}_{C} (\mathrm{h}_\mathbb{C} (\bigcup_{(C_1,C_2)\in\mathcal{C}_1\times\mathcal{C}_2}\{C_1\cup C_2\})))$\\
        & $=\bigcup_{(C_1,C_2)\in\mathcal{C}_1\times\mathcal{C}_2} \mathrm{Atom}( \mathrm{B}_{C} (\mathrm{h}_\mathbb{C} (\{C_1\cup C_2\})))$\\
        & $=\bigcup_{(C_1,C_2)\in\mathcal{C}_1\times\mathcal{C}_2} \mathrm{Atom}( \mathrm{B}_{C} (\mathrm{h}_\mathbb{C} (\{C_1\})))$\\
        & $\ \ \ \cup \mathrm{Atom}( \mathrm{B}_{C} (\mathrm{h}_\mathbb{C} (\{C_2\})))$\\        
        & $=\bigcup_{(C_1,C_2)\in\mathcal{C}_1\times\mathcal{C}_2} \mathrm{Atom}( \mathrm{B}_{C} (\mathrm{h}_\mathbb{C} (\{C_1\})))$\\
        & $\ \ \ \cup \bigcup_{(C_1,C_2)\in\mathcal{C}_1\times\mathcal{C}_2} \mathrm{Atom}( \mathrm{B}_{C} (\mathrm{h}_\mathbb{C} (\{C_2\}))) $\\
        & $=\bigcup_{C_1\in\mathcal{C}_1} \mathrm{Atom}( \mathrm{B}_{C} (\mathrm{h}_\mathbb{C} (\{C_1\})))$\\
        & $\ \ \ \cup \bigcup_{C_2\in\mathcal{C}_2} \mathrm{Atom}( \mathrm{B}_{C} (\mathrm{h}_\mathbb{C} (\{C_2\}))) $\\
        & $=\mathrm{Atom}( \mathrm{B}_{C} (\mathrm{h}_\mathbb{C} (\mathcal{C}_1))) \cup \mathrm{Atom}( \mathrm{B}_{C} (\mathrm{h}_\mathbb{C} (\mathcal{C}_2)))$\\
        \end{tabular}
      }

      \centerline{
        \begin{tabular}{l@{\ }l}
        $\mathrm{Atom}( \mathrm{B}_{C} (\mathrm{h}_\mathbb{C} (\oneg \mathcal{C}_1)))$ & $=\mathrm{Atom}( \mathrm{B}_{C} (\mathrm{h}_\mathbb{C} (\{\{n(x)\}\})))$\\
        & $=\bigcup_{C\in\mathcal{C}_1}\bigcup_{x\in C}\mathrm{Atom}( \mathrm{B}_{C} (\mathrm{h}_\mathbb{C} ( \{\{n(x)\}\})))$\\
        & $=\bigcup_{C\in\mathcal{C}_1}\bigcup_{x\in C}\mathrm{Atom}( \mathrm{B}_{C} (\mathrm{h}_\mathbb{C} ( \{\{x\}\})))$\\
        & $=\mathrm{Atom}( \mathrm{B}_{C} (\mathrm{h}_\mathbb{C} ( \bigcup_{C\in\mathcal{C}_1}\bigcup_{x\in C} \{\{x\}\})))$\\
        & $=\mathrm{Atom}( \mathrm{B}_{C} (\mathrm{h}_\mathbb{C} ( \mathcal{C}_1)))$\\
        \end{tabular}
      }
      
      Hence, since any operator $\mathrm{f}_\mathbb{C}$ is a composition of $\ocup$, $\ocap$ and $\oneg$:
      
      \centerline{
        $\mathrm{Atom}( \mathrm{B}_{C} (\mathrm{h}_\mathbb{C} (\mathrm{f}_\mathbb{C}(\mathcal{C}_1,\ldots,\mathcal{C}_k))) )
        =
        \bigcup_{1\leq j\leq k} \mathrm{Atom}( \mathrm{B}_{C} (\mathrm{h}_\mathbb{C} (\mathcal{C}_j)))
        $.
      }

      Consequently:
      
      \centerline{
        \begin{tabular}{l@{\ }l}
          & $\mathrm{Atom}(\mathrm{B}_C(\mathrm{h}_{\mathbb{C}}(\mathrm{D}_\mathbb{C}(a,\mathrm{f}_e(E_1,\ldots,E_k)))))$\\ & $=\mathrm{Atom}(\mathrm{B}_C(\mathrm{h}_{\mathbb{C}}(\mathrm{f}_\mathbb{C}(\mathrm{D}_\mathbb{C}(a,E_1),\ldots,\mathrm{D}_\mathbb{C}(a,E_k)))))$\\
          & $=\bigcup_{1\leq j\leq k}\mathrm{Atom}(\mathrm{B}_C(\mathrm{h}_{\mathbb{C}}(\mathrm{D}_\mathbb{C}(a,E_j))))$\\
          & $=\bigcup_{1\leq j\leq k} \bigcup_{E'\in \mathrm{Atom}(\mathrm{B}_{\mathrm{C}}(E_j))} \mathrm{Atom}(\mathrm{B}_C(\mathrm{h}_{\mathbb{C}}(\mathrm{D}_\mathbb{C}(a,E'))))$\\
          & $=\bigcup_{E'\in \mathrm{Atom}(\mathrm{B}_{\mathrm{C}}(E))} \mathrm{Atom}(\mathrm{B}_C(\mathrm{h}_{\mathbb{C}}(\mathrm{D}_\mathbb{C}(a,E'))))$\\
        \end{tabular}
      }
      
      Furthermore, if $\mathrm{f}\neq \vee$, $\mathrm{B}_A(\mathrm{f}_e(E_1,\ldots,E_k))=\mathrm{f}_e(E_1,\ldots,E_k)$.
      
      \centerline{
        $\mathrm{Atom}(\mathrm{B}_A(\mathrm{f}_e(E_1,\ldots,E_k)))=\{\mathrm{f}_e(E_1,\ldots,E_k)\}$.
      }
      
      Finally,
      
      \centerline{
        \begin{tabular}{l@{\ }l}
        $\mathrm{Atom}( \mathrm{B}_{A} (\mathrm{h}_\mathbb{C} (\mathcal{C}_1 \cup \mathcal{C}_2)))$ & $=\mathrm{Atom}( \mathrm{B}_{A} (\sum_{C\in \mathcal{C}_1 \cup \mathcal{C}_2} \wedge_e{_{E\in C}} c(E) ))$\\
        & $=\mathrm{Atom}( \vee_\mathbb{B}{_{C\in \mathcal{C}_1 \cup \mathcal{C}_2}} \mathrm{B}_{A} ( \wedge_e{_{E\in C}} c(E) ))$\\
        & $=\bigcup{_{C\in \mathcal{C}_1 \cup \mathcal{C}_2}} \mathrm{Atom}(  \mathrm{B}_{A} ( \wedge_e{_{E\in C}} c(E) ))$\\
        & $=\bigcup{_{C\in \mathcal{C}_1}} \mathrm{Atom}(  \mathrm{B}_{A} ( \wedge_e{_{E\in C}} c(E) ))$\\
        & $\ \ \ \cup \bigcup{_{C\in \mathcal{C}_2}} \mathrm{Atom}(  \mathrm{B}_{A} ( \wedge_e{_{E\in C}} c(E) ))$\\
        & $ =\mathrm{Atom}( \vee_\mathbb{B}{_{C\in \mathcal{C}_1}} \mathrm{B}_{A} ( \wedge_e{_{E\in C}} c(E) ))$\\
        & $\ \ \ \cup  \mathrm{Atom}(\vee_\mathbb{B}{_{C\in \mathcal{C}_2}}  \mathrm{B}_{A} ( \wedge_e{_{E\in C}} c(E) ))$\\
        & $= \mathrm{Atom}(  \mathrm{B}_{A} (\sum{_{C\in \mathcal{C}_1}} \wedge_e{_{E\in C}} c(E) ))$\\
        & $\ \ \ \cup  \mathrm{Atom}(  \mathrm{B}_{A} ( \sum{_{C\in \mathcal{C}_2}} \wedge_e{_{E\in C}} c(E) ))$\\
        & $=\mathrm{Atom}( \mathrm{B}_{A} (\mathrm{h}_\mathbb{C} (\mathcal{C}_1))) \cup \mathrm{Atom}( \mathrm{B}_{A} (\mathrm{h}_\mathbb{C} (\mathcal{C}_2)))$\\
        \end{tabular}
      }
      
      Consequently, any couple in $\{\mathrm{D}_\mathbb{C}\}\times\{\mathrm{B}_A,\mathrm{B}_B,\mathrm{B}_C\}$ satisfies the atom-derivability property.
      
      \textbf{(III)} According to Theorem~\ref{thm afa}, from \textbf{(I)} and \textbf{(II)}, the theorem holds.      
    \end{proof}

    The following example illustrates the computation of an AFA from a regular expression. In order to improve readability, regular expressions are simplified according to the following rules:
    
    \centerline{
      \begin{tabular}{l}
        $E+0 \equiv 0 +E\equiv E$\\
        $E\cdot 0 \equiv 0 \cdot E\equiv 0$\\
        $E\cdot 1 \equiv 1 \cdot E\equiv E$\\
      \end{tabular}
    }
    
    \begin{example}
      Let $E=((ab)^*a) \mathrm{XOR}_e ((abab)^*a)$. We now construct the  $(\mathrm{D}_\mathbb{C},\mathrm{B}_C)$-automaton of $E$, that is the AFA $(\Sigma,Q,E,F,\delta)$. We first compute the derivatives $\mathcal{C}$ of $E$ w.r.t. any symbol in $\Sigma$, according to the derivation $\mathrm{D}_\mathbb{C}$ and then compute the derivatives of the expressions that are atoms of the base of $\mathrm{h}_{\mathbb{C}}(\mathcal{C})$. This scheme is repeated until no more expression is produced.

      \centerline{\begin{tabular}{l@{\ }l}
        $\mathrm{D}_\mathbb{C}(a,E)$ & $=(\oneg(\mathrm{D}_\mathbb{C}(a,(ab)^*a)) \ocap \mathrm{D}_\mathbb{C}(a,(abab)^*a))$\\
        & $\ \  \ocup (\mathrm{D}_\mathbb{C}(a,(ab)^*a) \ocap \oneg(\mathrm{D}_\mathbb{C}(a,(abab)^*a))) $\\
        & $=(\oneg(\{\{b(ab)^*a\},\{1\}\}) \ocap \{\{bab(abab)^*a\},\{1\}\})$\\
        & $\ \  \ocup (\{\{b(ab)^*a\},\{1\}\} \ocap \oneg(\{\{bab(abab)^*a\},\{1\}\})) $\\
        & $=(\{\{\overline{b(ab)^*a},\overline{1}\}\} \ocap \{\{bab(abab)^*a\},\{1\}\})$\\
        & $\ \  \ocup (\{\{b(ab)^*a\},\{1\}\} \ocap \{\{\overline{bab(abab)^*a},\overline{1}\}\}) $\\
        & $=\{\{\overline{b(ab)^*a},\overline{1},bab(abab)^*a\},\{\overline{b(ab)^*a},\overline{1},1\},$\\
        & $\ \ \{b(ab)^*a,\overline{bab(abab)^*a},\overline{1}\},\{1,\overline{bab(abab)^*a},\overline{1}\}\} $\\        
      \end{tabular}}
      
      \begin{minipage}{0.49\linewidth}
      \begin{tabular}{l@{}l}
        $\mathrm{D}_\mathbb{C}(b,E)$ & $=\emptyset$\\
        $\mathrm{D}_\mathbb{C}(a,b(ab)^*a)$ & $=\emptyset$\\
        $\mathrm{D}_\mathbb{C}(b,b(ab)^*a)$ & $=\{\{(ab)^*a\}\}$\\
        $\mathrm{D}_\mathbb{C}(a,(ab)^*a)$ & $=\{\{b(ab)^*a\},$\\
        & $\{1\}\}$\\
        $\mathrm{D}_\mathbb{C}(b,(ab)^*a)$ & $=\emptyset$\\
        $\mathrm{D}_\mathbb{C}(a,1)$ & $=\emptyset$\\
        $\mathrm{D}_\mathbb{C}(b,1)$ & $=\emptyset$\\
        $\mathrm{D}_\mathbb{C}(a,bab(abab)^*a)$ & $=\emptyset$\\
      \end{tabular}
      \end{minipage}
      \hfill      
      \begin{minipage}{0.49\linewidth}
      \begin{tabular}{l@{}l}
        $\mathrm{D}_\mathbb{C}(b,bab(abab)^*a)$ & $=\{\{ab(abab)^*a\}\}$\\
        $\mathrm{D}_\mathbb{C}(a,ab(abab)^*a)$ & $=\{\{b(abab)^*a\}\}$\\
        $\mathrm{D}_\mathbb{C}(b,ab(abab)^*a)$ & $=\emptyset$\\
        $\mathrm{D}_\mathbb{C}(a,b(abab)^*a)$ & $=\emptyset$\\
        $\mathrm{D}_\mathbb{C}(b,b(abab)^*a)$ & $=\{\{(abab)^*a\}\}$\\
        $\mathrm{D}_\mathbb{C}(a,(abab)^*a)$ & $=\{\{bab(abab)^*a\},$\\
        &$ \{1\}\}$\\
        $\mathrm{D}_\mathbb{C}(b,(abab)^*a)$ & $=\emptyset$\\
      \end{tabular}
      \end{minipage}
      
      From the computation of the derivatives, we deduce:
      
      \begin{itemize}
      \item the set $Q=\{q_1,\ldots,q_8\}$ of states :
      
      \centerline{
        $q_1 =E$, $q_2 =b(ab)^*a$, $q_3 =1$, $q_4 =bab(abab)^*a$,
      }
      
      \centerline{
        $q_5 =ab(abab)^*a$, $q_6 =b(abab)^*a$, $q_7=(ab)^*a, q_8=(abab)^*a$,
      }
      
      \item the function $F$ from $Q$ to $\mathbb{B}$:
      
      \centerline{
        $F(q_3)=F(q_7)=F(q_8)=1$,
      }
      
      \centerline{
        $F(q_1)=F(q_2)=F(q_4)=F(q_5)=F(q_6)=0$,
      }
      
      \item and the function $\delta$ from $Q \times \Sigma$ to $\mathrm{BoolForm}(Q)$:
      
      \centerline{\begin{tabular}{|c||c|c|c|c|c|c|c|c|}
      \hline
        & $q_1$ & $q_2$ & $q_3$ & $q_4$ & $q_5$ & $q_6$ & $q_7$ & $q_8$\\
      \hline\hline
      $a$ & 
        \begin{tabular}{c}
          $\neg_{\mathbb{B}} q_2 \wedge_{\mathbb{B}}  \neg_{\mathbb{B}} q_3 \wedge_{\mathbb{B}} q_4 $\\
          $\vee_\mathbb{B}$\\
          $\neg_{\mathbb{B}} q_2 \wedge_{\mathbb{B}}  \neg_{\mathbb{B}} q_3 \wedge_{\mathbb{B}} q_3$\\
          $\vee_\mathbb{B}$\\
          $ q_2 \wedge_{\mathbb{B}}  \neg_{\mathbb{B}} q_4 \wedge_{\mathbb{B}} \neg_{\mathbb{B}} q_3$\\
          $\vee_\mathbb{B}$\\
          $ q_3 \wedge_{\mathbb{B}}  \neg_{\mathbb{B}} q_4 \wedge_{\mathbb{B}} \neg_{\mathbb{B}} q_3$\\
        \end{tabular}
         &
         $0$
         &
         $0$
         &
         $0$
         &
         $q_6$
         &
         $0$
         &
         $q_2\vee_{\mathbb{B}} q_3$
         &
         $q_4\vee_{\mathbb{B}} q_3$
         \\
         \hline
      $b$ & $0$ & $q_7$ & $0$ & $q_5$ & $0$ & $q_8$ & $0$ & $0$\\
      \hline
      \end{tabular}}      
      \end{itemize}      
      
      Let us notice that in this example, substituting $\mathrm{B}_\mathbb{C}(E)$ to $E$ in $I$ would  produce a smaller automaton.
    \end{example}

\section{Conclusion}
This paper provides two main results.
First, the theoretical scheme of derivations {\it via} a support 
allows us to formalize intrinsic properties of (unrestricted) regular expression derivations.
As a by-product we obtain a kind of unification of the classical derivations that
compute word derivatives, partial derivatives or extended partial derivatives. 
Secondly, the notion of base function that associates a boolean formula with a regular expression
allows us to show how to deduce an alternating automaton equivalent to a given regular expression
from the set of its derivatives {\it via} a given support.
We are now investigating new features: it is possible, for example,
to define morphisms from one support to another one
in order to study the relations between the associated automata.
An other perspective is to replace the derivation mapping by an other mapping (right derivation
or left-and-right derivation for example, or any transformation with good properties).
There also exist well-know algorithms to reduce boolean formulas (Karnaugh, Quine-McCluskey);
we intend to investigate reduction techniques based on derivation.
Finally we intend to extend the theoretical derivation scheme in order to handle the derivation of regular expression 
with multiplicities.
 
  \bibliography{biblio}

\end{document}